\newif\ifAnon
\newcommand{\R}{\mathbb{R}}
\renewcommand{\subsection}[1]{\textbf{#1}.}
\renewcommand{\subsubsection}[1]{\textbf{#1}.}
\definecolor{okabe1}{HTML}{000000}
\definecolor{okabe2}{HTML}{E69F00}
\definecolor{okabe3}{HTML}{56B4E9}
\definecolor{okabe4}{HTML}{009E73}
\definecolor{okabe5}{HTML}{F0E442}
\definecolor{okabe6}{HTML}{0072B2}
\definecolor{okabe7}{HTML}{D55E00}
\definecolor{okabe8}{HTML}{CC79A7}
\newcommand{\xmark}{\textcolor{okabe7}{\ding{55}}}%
\renewcommand{\emph}[1]{\textit{\textbf{#1}}}
\let\epsilon\varepsilon
\newcommand{\drawCircleAlgorithm}[3][0.3]{%
  % #1 = line width (default 0.3)
  % #2 = circles list
  % #3 = total number of circles
  \@ifnextchar\bgroup{\@drawCircleAlgorithm{#1}{#2}{#3}}{\@drawCircleAlgorithm{#1}{#2}{#3}{}}%
}
\newcommand{\@drawCircleAlgorithm}[4]{%
  % #1 = line width 
  % #2 = circles list
  % #3 = total number of circles
  % #4 = text for last probe (default empty)
  \def\circles{#2}
  \draw[line width=#1,dashed] (0,0) circle (1);
  \foreach \x/\y/\r [count=\i] in \circles {
    \pgfmathtruncatemacro\colorindex{mod(\i-1,7)+2}
    \draw[line width=#1,okabe\colorindex] (\x,\y) circle (\r);
    \pgfmathsetmacro\scalefactor{max(0.4, min(1.2, 2.5*\r))}
    \ifnum\i<#3
      \node[okabe\colorindex,scale=\scalefactor] at (\x,\y) {\i};
    \else\ifnum\i=#3
      \node[okabe\colorindex,scale=\scalefactor] at (\x,\y) {#4};
    \fi\fi
  }
}
\title{The Rectilinear Marco Polo Problem\thanks{This research was supported in part by NSF grant 2212129.}}
\author{Anonymous Author(s)}
\author{Ofek Gila\thanks{University of California, Irvine, \texttt{ogila@uci.edu}}
    \and
    Michael T. Goodrich\thanks{University of California, Irvine, \texttt{goodrich@uci.edu}}
    \and
    Zahra Hadizadeh\thanks{University of Rochester, \texttt{zhadizadeh99@gmail.com}}
    \and
    Daniel S. Hirschberg\thanks{University of California, Irvine, \texttt{dhirschb@uci.edu}}
    \and
    Shayan Taherijam\thanks{University of California, Irvine, \texttt{staherij@uci.edu}}}
\begin{document}

%
% Remove this if the paper is accepted:
%
 \pagestyle{plain}

\maketitle

\begin{abstract}
We study the \emph{rectilinear Marco Polo problem}, which generalizes the
Euclidean version of the Marco Polo problem
for performing geometric localization 
to rectilinear search environments, such as
in geometries motivated from urban settings, and to higher dimensions.
In the
rectilinear Marco Polo problem, there is at least one 
\emph{point of interest} (POI) within distance $n$,
in either the $L_1$ or $L_\infty$ metric, from the origin.
Motivated from a search-and-rescue application,
our goal is to move a \emph{search point}, $\Delta$,
from the origin to a location within distance $1$ of a POI.
We periodically issue \emph{probes} from $\Delta$ out a
given distance (in either the $L_1$ or $L_\infty$ metric)
and if a POI is within the specified distance of 
$\Delta$, then we learn this (but no
other location information).
Optimization goals are to minimize the number of probes and the distance
traveled by $\Delta$.
We describe a number of efficient search strategies
for rectilinear Marco Polo problems
and we analyze each one in terms of the size, $n$, of the search domain, as
defined by the maximum distance to a POI.
\end{abstract}

\section{Introduction}

Gila, Goodrich, Hadizadeh, Hirschberg, and 
Taherijam~\cite{gila2025combinatorialdronesearching} 
introduce the Marco Polo problem, which they motivate
in terms of one or more points of interest (POIs),
thought of as hikers lost in a forest, that
we would like to localize using a mobile search point, $\Delta$.
Each lost hiker is assumed to have a wireless 
device that can respond to probes sent from $\Delta$ to a specified
distance such that if a lost hiker is within that distance of $\Delta$,
then the search algorithm will receive a positive response.
Such probes use up power, of course, both for $\Delta$ and for a POI's
tracking device; hence, the goal is to devise a search algorithm
for $\Delta$ and a sequence of probes that minimizes the number of probes
needed to locate a POI to within a distance of~1.

In the formulation of Gila {\it et al.}~\cite{gila2025combinatorialdronesearching}, the underlying
geometry for the Marco Polo problem is Euclidean, such that $\Delta$ can move
unrestrictedly in any direction and probes are circles, which seems
reasonable for searching in a forest but not for searching in
an urban environment where distance is more accurately abstracted
as being rectilinear.
% Thus, i
In this paper, we are interested in studying a rectilinear
version of the
Marco Polo problem.

As a colorful motivation of the two-dimensional version of the
rectilinear Marco Polo problem, 
suppose one or more people
have been kidnapped and are being held in one or more secret 
points of interest (POIs)
in a city 
(like New York, Chicago, or Toronto) whose streets are essentially grids.
A mobile search point, $\Delta$, can move
to search for them that is restricted to flying or driving along
rectilinear paths (since it cannot fly or drive through buildings). 
Each kidnap victim at a POI has a hidden electronic device that can respond to probes
from $\Delta$, which can issue probes to specified rectilinear 
distances such that if there is a kidnap victim within this distance,
then our search algorithm will learn this.
But the search algorithm does not learn
the direction or distance to the kidnap victim.
The optimization problem
is to minimize the number of probes and/or victim responses, as well
as possibly minimizing the travel distance for $\Delta$.
We are therefore interested
in efficient searching strategies
for rectilinear Marco Polo problems
with analyses in terms of the size, $n$, of the search domain.

We can therefore formulate the rectilinear Marco Polo problem as a computational
geometry problem, where we have at least one \emph{point of interest} (POI) 
at distance at most $n$ from the origin, and we want to move a \emph{search point},
$\Delta$,
to within distance $1$ of a POI, guided by probes.
A probe is a query specified by $\Delta$'s position and a distance, $d$, such
that we learn whether or not a POI is within distance $d$ from 
$\Delta$, in either the $L_1$ or $L_\infty$ metric.
Our optimization goals are to minimize the number of probes and distance
for $\Delta$ to travel to find a POI.

\paragraph{Related Prior Work.}
We are not familiar with any prior work
on the rectilinear Marco Polo problem.
As mentioned above,
Gila, Goodrich, Hadizadeh, Hirschberg, and Taherijam~\cite{gila2025combinatorialdronesearching} 
introduce the Euclidean version
of the Marco Polo problem, where 
search paths are not restricted and 
travel distances and probe distances are measured with the 
Euclidean $L_2$ metric.
For example, 
they provide a number of carefully choreographed travel patterns
and probe strategies, including one that finds a POI with
$3.34\lceil\log n\rceil$ probes and flight distance $6.02n$
and a strategy that uses
$2.53\lceil\log n\rceil$ probes and flight distance $45.4n$.

The Marco Polo problem
is related to combinatorial group testing,
see, e.g., \cite{du1999combinatorial,eppstein2007improved,goodrich2008improved,%
dorfman,covid}.
In this problem,
one is given a set of $n$ items, at most
$d$ of which are ``defective.''
Subsets of the items can be 
pooled and tested as a group, such that if one of the items 
in the pool is defective, then the test for the pool will be positive.
Tests can be organized either adaptively or non-adaptively to
efficiently identify the defective items based on the outcomes
of the tests.
The Marco Polo problem
differs from combinatorial group
testing, however, in that the search space for the Marco Polo problem
is a geometric region and tests must be connected
geometric shapes (like squares), whereas the search space in
combinatorial group testing is for a discrete set of $n$ items
and tests can be arbitrary subsets.

Another related problem in computational geometry 
is the freeze tag 
problem~\cite{arkin2006freeze,hammar2006online,arkin2003improved,bonichon2024euclidean}, 
which has also been studied in the rectilinear 
setting~\cite{bonichon2024freeze,pedrosa2023freeze}, 
where one is interested in moving robot points
in the plane to ``wake up'' a collection of robots.
Also, another related rectilinear combinatorial optimization problem is the 
optimization problem abstracted from the Minesweeper game; 
see, e.g.,~\cite{kaye2000minesweeper}.
There is also considerable prior work
on search-and-rescue algorithms focused on
non-combinatorial solutions, including the use of
continuous monitoring, sophisticated cameras, and 
non-adaptive travel patterns;
see, e.g.,~\cite{albanese,BETTI,MISHRA20201,schedl,tian20,sun18,cenwits,atif21}.

\paragraph{Problem definition.}
In the \emph{rectilinear Marco Polo} problem,
there are $k\ge 1$ entities, which we'll call \emph{points of interest} (POIs),
with unknown positions, at least one of which are within
a distance, $n$, in the $L_1$ or $L_\infty$ distance metric,
of a point, $O$, called the \emph{origin}. 
That is, the search region is a diamond or square in $\R^2$ or a 
octahedron or cube in $\R^3$.

A \emph{probe}, $p(x,y,d)$, is a query that asks if there is any POI
within distance $d$ of the current position, $(x,y)$, 
of a search point, $\Delta$,
% in the $L_1$ or $L_\infty$ metric,
measured under a distance metric,
since such a
point is the position at which, e.g., a search algorithm would issue
a probe request to a wireless device of a lost kidnap victim.
The goal is to design a search strategy for $\Delta$ to localize one or more
POIs to within a distance of $1$.
In this paper, we primarily consider the case where $\Delta$ only searches for a
single POI, which can be combined with an \emph{incremental} search strategy
which finds POIs one at a time, for example, using the generalized algorithm
of~\cite{gila2025combinatorialdronesearching}.
We also make no assumptions about the number of POIs, $k$, and their locations,
besides the fact that at least one POI is within distance $n$ from the origin,
referred to the \emph{unbounded} version of the problem.
Finally and most importantly, unlike the paper by Gila {\it et
al.}~\cite{gila2025combinatorialdronesearching}, which focuses solely on the
$L_2$ distance metric, we consider the rectilinear metrics, $L_1$ and
$L_\infty$.

% despite the fact that there may be multiple POIs either within the
% search area or outside it.
% Given this setup, there are a number of possible constraints that 
% define instances of the rectilinear Marco Polo problem,
% including:
% \begin{itemize}
% \item
% The number of POIs, $k$,
% may be known or unknown. In addition,
% a search algorithm can be
% \emph{incremental}, which finds POIs one at a time,
% or \emph{coordinated}, which finds multiple POIs in a coordinated
% fashion.
% \item
% In the \emph{bounded} version of the problem, there are no POIs
% at distance more than $n$ from the origin. In the \emph{unbounded} version,
% however, there may be POIs that are at distance more than $n$
% from the origin.
% \item
% The underlying geometry can be in the $L_1$ or $L_\infty$ metric, or
% a combination of the two, such as where $\Delta$'s distance
% is measured in the $L_\infty$ metric but its probe distances are
% in the $L_1$ metric.
% \end{itemize}

% \ofek{In this paper, we do not really consider multiple victims (besides stating
% that the methodology in the previous paper directly transfers to our
% algorithms), and we always only consider the `unbounded' version. Further, we
% use $k$ to denote the number of dimensions, since both the variables $n$ and $d$
% are already used.}

For any search strategy, 
there are a number of ways we can 
measure the effectiveness of the strategy, including:
\begin{itemize}
\item
$P(n)$: the number of probes issued by $\Delta$.
\item
$R_\text{max}$: the maximum number of times a POI
must respond to a probe.
\item
$D(n)$: the total distance traveled by $\Delta$, in a chosen
rectilinear metric, such as $L_1$ or $L_\infty$.
\end{itemize}

\paragraph{Our Results.}
In this paper, we provide a number of efficient algorithms for 
the solving Marco Polo problems,
with algorithms that achieve optimal or near-optimal performance across all
measures.
We begin with a warm-up algorithm which sequentially checks each quadrant of the
search area in the 2D case, and each octant in the 3D case.
While this simple algorithm indeed has a poor worst-case probe performance, we
show that it effectively minimizes the number of responses required by each
POI to just $\lceil \log n \rceil$, regardless of the number of
dimensions.
We then present a pair of more sophisticated algorithms that use a domino-like
recursive search pattern to achieve great probe complexities, to just 
$2 \lceil \log n \rceil + 1$ in 2D and $3 \lceil \log n \rceil + 3$ in 3D, near-optimal
with respect to the lower bounds of 
$2 \lceil \log n \rceil$ and $3 \lceil \log n \rceil$, for 2D and 3D, respectively.

We then focus on minimizing the distance traveled by $\Delta$ with respect to
the distance to the nearest POI, $\delta_\text{min}$, presenting an
algorithm which performs a binary search for each dimension, which we call
\emph{central binary search} (CBS).
A search point, $\Delta$, following this algorithm will travel a distance of at most 
$2 \delta_\text{min} + O(1)$  in 2D, $3 \delta_\text{min} + O(1)$ in 3D, while
still maintaining a near-optimal probe complexity.
Afterwards, we show how to extend our algorithms to higher dimensions, achieving
the orthant algorithm that yields good POI response performance for all
dimensions, and achieving the generalized CBS algorithm, which provides
near-optimal probe complexity and instance-optimal distance performance.
Finally, we present a method to make the probe and response performance of any
algorithm instance-optimal with respect to $\delta_\text{min}$.
In \Cref{sec:experiments}, we include experiments that support our results.

For simplicity, we primarily focus on instances of
the rectilinear Marco Polo problem in the $L_\infty$ metric in this paper,
representing the search area as a square or cube rather than as a diamond or
octahedron in our figures. Likewise, we refer to the search area as a hypercube in
general case rather than as a cross-polytope.
Nevertheless, our results apply equally well to the $L_1$ metric except where
explicitly specified.

% a majority of our results apply under either the $L_1$ or
% $L_\infty$ metric.

% \ofek{Not sure if the above is sufficient. It may be}

% \ofek{We should mention that the figures show the $L_\infty$ metric because it
% is easier to plot, and our verbiage refers to hypercubes also for simplicity
% (also $L_\infty$), but our results translate directly to the $L_1$ metric with
% the figures rotated by 45 degrees and the hypercubes replaced by
% cross-polytopes. I am not very familiar with all this terminology, so I may get
% some of this wrong.}

% \input{find-any}
% \input{find-all}
% \input{results}

\section{Rectilinear Searching Strategies}\label{sec:algs}

We first 
% consider the most realistic cases of 2D and 3D rectilinear searching in
% detail.
% We 
% describe
introduce 
a number of algorithms
for the 2D and 3D rectilinear searching problem%
, including more natural algorithms and some
more involved ones, and show how each are
% which are each 
optimal with respect to a different metric.
In general, we assume that there may be multiple POIs, yet we are only
interested in finding one of them, and that $\Delta$ starts its search from the
origin.
We do note, however, that each of our algorithms can be used as a subroutine in
the general method of finding all targets as described
in~\cite{gila2025combinatorialdronesearching}.
We generalize several algorithms to higher dimensions in a later section.

\subsection{Quadrant and Octant Algorithms}
Perhaps the first algorithms that come to mind are those that divide the
original search area into quadrants and octants.
We probe each quadrant from its center, with a probe of half the radius of the
parent region.
Since a POI is known to exist within the search area,
at most three quadrants need to be probed.
And each subsequent \emph{layer} has half the radius of the previous.
As such, there are at most $\lceil \log{n} \rceil$ layers total, and since each
layer may take up to three probes, the total number of probes, $P(n)$, is at most
$3\lceil \log{n} \rceil$ probes.
In the 3D case, there are eight octants in total, where at most seven must
be probed, resulting in $P(n) < 7\lceil \log{n} \rceil$ probes.
See \Cref{fig:quad-search}.

\begin{figure}[ht!]
	\centering
	\resizebox*{0.7\linewidth}{!}{\trimbox{0 3 0 0}{\begin{tikzpicture}
	\fill[pattern={Lines[angle=-45,distance=6]}, pattern color=okabe7, opacity=0.7] (0,1) rectangle (2,2);
	\fill[pattern={Lines[angle=-45,distance=6]}, pattern color=okabe7, opacity=0.7] (1,0) rectangle (2,1);

	\fill[pattern={Lines[angle=-45,distance=6]}, pattern color=okabe7, opacity=0.7] (0,0.5) rectangle (0.5,1);

	\fill[okabe4, opacity=0.4] (0.5,0.5) rectangle (1,1);

	% draw a cross
	\draw[line width=0.5] (0,1) -- (2,1);
	\draw[line width=0.5] (1,0) -- (1,2);

	% draw another cross
	\draw[line width=0.25] (0,0.5) -- (1,0.5);
	\draw[line width=0.25] (0.5,0) -- (0.5,1);
	
	\node at (0.5,1.5) {\large 1};
	\node at (1.5,1.5) {\large 2};
	\node at (1.5,0.5) {\large 3};

	\node at (0.25,0.75) {\small 4};
	\node at (0.75,0.75) {\small 5};

	\draw[line width=1] (0,0) -- (2,0) -- (2,2) -- (0,2) -- cycle;

	\node at (0.9,0.85) {\scriptsize \xmark};
	\node at (0.35,0.2) {\scriptsize \xmark};
	\node at (0.1,0.25) {\scriptsize \xmark};

	\draw[decorate,decoration={brace,mirror}] (0,-0.1) -- (1,-0.1) node[midway,below,font=\tiny] {$n$};
\end{tikzpicture}}}
	\caption{\label{fig:quad-search}
		A simple quadrant search algorithm, showing the first 5 probes.
		Only three of the four quadrants must be probed, as the POIs (denoted by \xmark)
		must be in the final quadrant if the first three probes fail.
		Regions with diagonal lines correspond to failed probes, while probe 5
		is a successful probe.
		The search will continue in the 5th region.
		For simplicity, figures represent $L_\infty$ probes, but our algorithms
		and figures translate directly to $L_1$ probes when considered
		diagonally.
	}
\end{figure}
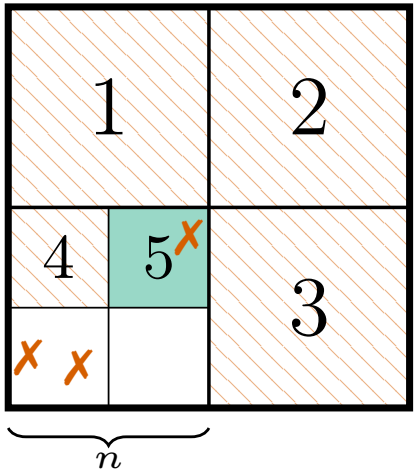

While our subsequent algorithms will reduce the total number of probes, this
algorithm behaves best with regards to the maximum number of times a POI
must respond to a probe, $R_\text{max}$.
Specifically, since a POI only responds at most once per layer for both 2D
and 3D, we have that $R_\text{max} \leq \lceil \log{n} \rceil$ responses.
We note the similarity between this algorithm and the hexagonal algorithms
of~\cite{gila2025combinatorialdronesearching}.

\subsection{Domino Algorithms}
Our quadrant algorithm was able to find a POI using at most
$3 \lceil \log{n} \rceil$ probes, and the question remains---can we do better?
For the 2D case, since we start with an area of $(2n)^2$ and end with an area of
no larger than $(2)^2$, and each probe, in the worst case, at most halves the
remaining area, there is a trivial lower bound of $2 \lceil \log{n} \rceil$
probes.
Similar reasoning can be used to lower bound the worst case number of probes in the
3D case to at least $3 \lceil \log{n} \rceil$ probes.
In this section, we introduce our first two algorithms, which are able to achieve
within constant factors of these lower bounds; due to their structure, we
refer to these as the 2D and 3D domino algorithms, respectively.

\subsubsection{The 2D Domino Algorithm}
% 
% Let's start with the 2D domino algorithm.
In the 2D domino algorithm, we refer to a position as a 2-domino if it consists
of two equally sized areas---a $d \times d$ area where a POI is known
to exist, and an adjacent $d \times d$ area known to be empty, i.e., where
\textit{no} POI exists.
Let the empty region be to the left of the remaining search area as depicted in
\Cref{fig:2-domino}, without loss of generality.
% Our goal is to within 2 probes, achieve a half-sized 2-domino, reducing the area
% by a factor of 4.
% See \Cref{fig:2-domino}.
Our first probe has radius $d / 2$, and is placed halfway between the two areas.
If the probe fails, we know that the POI must be in the remaining right half
of the search area.
If the probe succeeds, however, we take advantage of the fact that we know that
the left half of the probe is empty, and similarly reduce the search area.
Regardless of the result, we perform a second probe with radius $d / 4$ in one
half of the remaining search area, and achieve a new 2-domino with a quarter of
the area.
This is optimal, since the remaining area is halved with each probe.
\Cref{fig:2-domino} depicts this procedure.

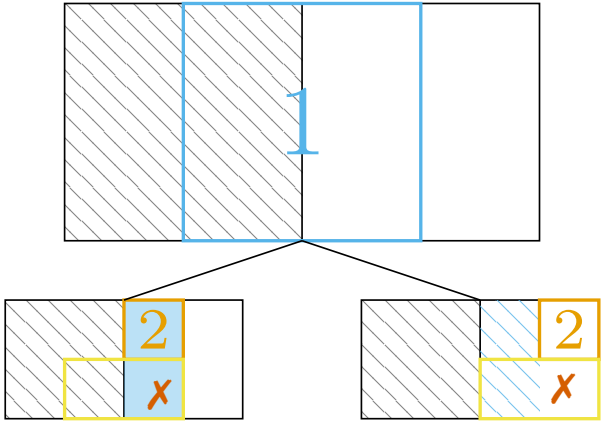
\begin{figure}[ht!]
	\centering
	\resizebox*{\linewidth}{!}{\begin{tikzpicture}
	\draw[line width=0.2] (-1,-0.5) rectangle (1,0.5);
	\draw[line width=0.2] (0,-0.5) -- (0,0.5);
	\fill[pattern={Lines[angle=-45,distance=6]}, opacity=0.7] (-1,-0.5) rectangle (0,0.5);

	\draw[okabe3] (-0.5,-0.5) rectangle (0.5,0.5);
	\node[okabe3] at (0,0) {\large 1};
	
	% draw another domino below and to the left of this one
	\begin{scope}[scale=0.5, shift={(-1.5,-2)}]
		\draw[line width=0.2] (-1,-0.5) rectangle (1,0.5);
		\draw[line width=0.2] (0,-0.5) -- (0,0.5);
		\fill[pattern={Lines[angle=-45,distance=6]}, opacity=0.7] (-1,-0.5) rectangle (0,0.5);
		\fill[okabe3, opacity=0.4] (0,-0.5) rectangle (0.5,0.5);

		\draw[okabe2] (0,0) rectangle (0.5,0.5);
		\node[okabe2] at (0.25,0.25) {\scriptsize 2};
		\node at (0.3,-0.3) {\tiny \xmark};

		\draw[okabe5] (-0.5,-0.5) rectangle (0.5,0);

		\node (a) at (0,0.5) {};
	\end{scope}

	\begin{scope}[scale=0.5, shift={(1.5,-2)}]
		\draw[line width=0.2] (-1,-0.5) rectangle (1,0.5);
		\draw[line width=0.2] (0,-0.5) -- (0,0.5);
		\fill[pattern={Lines[angle=-45,distance=6]}, opacity=0.7] (-1,-0.5) rectangle (0,0.5);
		\fill[pattern={Lines[angle=-45,distance=6]}, pattern color=okabe3, opacity=0.9] (0,-0.5) rectangle (0.5,0.5);

		\draw[okabe2] (0.5,0) rectangle (1,0.5);
		\node[okabe2] at (0.75,0.25) {\scriptsize 2};
		\node at (0.7,-0.25) {\tiny \xmark};

		\draw[okabe5] (0,-0.5) rectangle (1,0);

		\node (b) at (0,0.5) {};
	\end{scope}

	% draw line from 0, -0.5 to a and b
	\draw[line width=0.2] (0,-0.5) -- (a.center);
	\draw[line width=0.2] (0,-0.5) -- (b.center);
\end{tikzpicture}}
	\caption{\label{fig:2-domino}
		The recursive 2-domino procedure as used in the 2D domino algorithm.
		Each probe reduces the remaining area by a factor of 2.
		Regardless of the results of these two probes, we are left with a new
		2-domino where each dimension is halved, depicted in yellow.
	}
\end{figure}

The question that remains, however, is how to achieve the initial 2-domino.
To this end, our 2D domino algorithm performs the top layer using the simple
quadrant algorithm depicted in \Cref{fig:quad-search}.
In the best case, if the first probe succeeds, we have reduced the area by a
factor of 4 and simply continue our algorithm recursively into this quadrant.
Otherwise, if the first probe fails, any subsequent probe that succeeds is
adjacent to a quadrant that is known to be empty, and we can initiate our
2-domino procedure.
In the worst case, it will take all 3 initial probes to reach a domino with a
quarter of the remaining search area, resulting in $ P(n) \leq 2 \lceil \log{n}
\rceil + 1$ probes.

\subsubsection{The 3D Domino Algorithm}
Unlike for our quadrant algorithm, it is not as straightforward to extend our 2D
domino algorithm to a 3D algorithm, and we only do so under the $L_\infty$
metric.
Our first step is to define a 4-domino as a 3D region consisting of four equally
sized $d \times d \times d$ cubes, where a POI is known to exist in one, and
all the rest are known to be empty.
As before, the first probe has radius $d / 2$, and
is placed halfway between the search area and one of
the two adjacent empty regions.
The second probe also has the same radius, and is placed orthogonally depending
on the result of the first probe as to halve the remaining volume again.
Finally, we perform the final probe of radius $d / 4$ in a half of the
remaining search area, resulting in a new 4-domino with an eighth of the
original volume.
See \Cref{fig:4-domino}.

\begin{figure}[ht!]
	\centering
	\def\probeone{1}
	\def\probetwo{1}
	\def\probethree{1}
	\def\showsubdominos{1}
	\resizebox*{\linewidth}{!}{\tdplotsetmaincoords{60}{-30}
\begin{tikzpicture}[tdplot_main_coords]
	% \draw[okabe3] (0.5,0,0) -- (1.5,0,0) -- (1.5,0,1) -- (0.5,0,1) -- (0.5,0,0) -- (0.5,1,0) -- (1.5,1,0) -- (1.5,1,1) -- (0.5,1,1) -- (0.5,1,0) -- (0.5,0,0);
	% \draw[okabe3] (1.5,0,0) -- (1.5,1,0);
	% \draw[okabe3] (1.5,0,1) -- (1.5,1,1);
	% \draw[okabe3] (0.5,0,1) -- (0.5,1,1);

	% % draw third probe cube
	% \fill[okabe4,opacity=0.2] (1.5,0,0.5) -- (2,0,0.5) -- (2,0.5,0.5) -- (1.5,0.5,0.5) -- (1.5,0,0.5);
	% \fill[okabe4,opacity=0.2] (2,0,0.5) -- (2,0.5,0.5) -- (2,0.5,1) -- (2,0,1) -- (2,0,0.5);
	% \fill[okabe4,opacity=0.2] (1.5,0.5,0.5) -- (2,0.5,0.5) -- (2,0.5,1) -- (1.5,0.5,1) -- (1.5,0.5,0.5);

	% \draw[okabe4] (1.5,0,0.5) -- (2,0,0.5) -- (2,0,1) -- (1.5,0,1) -- (1.5,0,0.5) -- (1.5,0.5,0.5) -- (2,0.5,0.5) -- (2,0.5,1) -- (1.5,0.5,1) -- (1.5,0.5,0.5) -- (1.5,0,0.5);
	% \draw[okabe4] (2,0,0.5) -- (2,0.5,0.5);
	% \draw[okabe4] (2,0,1) -- (2,0.5,1);
	% \draw[okabe4] (1.5,0,1) -- (1.5,0.5,1);

	% % fill each face with a very light opacity color, okabe1, 0.1 opacity
	\fill[pattern={Lines[angle=-45,distance=3]}, pattern color=okabe1, opacity=0.1] (0,0,0) -- (1,0,0) -- (1,2,0) -- (0,2,0) -- (0,0,0);
	\fill[pattern={Lines[angle=-45,distance=3]}, pattern color=okabe1, opacity=0.1] (2,1,0) -- (2,1,1) -- (2,2,1) -- (2,2,0) -- (2,1,0);
	\fill[pattern={Lines[angle=-45,distance=3]}, pattern color=okabe1, opacity=0.1] (0,2,0) -- (2,2,0) -- (2,2,1) -- (0,2,1) -- (0,2,0);
	\fill[pattern={Lines[angle=-45,distance=3]}, pattern color=okabe1, opacity=0.1] (0,1,0) -- (1,1,0) -- (1,1,1) -- (0,1,1) -- (0,1,0);
	\fill[pattern={Lines[angle=-45,distance=3]}, pattern color=okabe1, opacity=0.1] (1,0,0) -- (1,2,0) -- (1,2,1) -- (1,0,1) -- (1,0,0);
	\fill[pattern={Lines[angle=-45,distance=3]}, pattern color=okabe1, opacity=0.1] (1,1,0) -- (2,1,0) -- (2,2,0) -- (1,2,0) -- (1,1,0);

	% draw a 2x2x1 box

	% hidden lines
	\draw[dashed, line width=0.1, opacity=0.2] (2,2,1) -- (2,2,0) -- (2,0,0);
	\draw[dashed, line width=0.1, opacity=0.2] (2,2,0) -- (0,2,0);

	% visible lines
	\draw[line width=0.1,opacity=0.8] (0,0,0) -- (2,0,0) -- (2,0,1) -- (2,2,1) -- (0,2,1) -- (0,0,1) -- (0,0,0) -- (0,2,0) -- (0,2,1);
	\draw[line width=0.1,opacity=0.8] (0,0,1) -- (2,0,1);

	% subdivide into 4 cubes
	\draw[dashed, line width=0.1, opacity=0.2] (1,0,0) -- (1,2,0) -- (1,2,1);
	\draw[dashed, line width=0.1, opacity=0.2] (0,1,0) -- (2,1,0) -- (2,1,1);
	\draw[dashed, line width=0.1, opacity=0.2] (1,1,0) -- (1,1,1);

	\draw[line width=0.1,opacity=0.8] (1,0,0) -- (1,0,1) -- (1,2,1);
	\draw[line width=0.1,opacity=0.8] (0,1,0) -- (0,1,1) -- (2,1,1);

	\ifdef{\probeone}{
		\draw[dashed, okabe2, line width=0.4, opacity=0.8] (1,0.5,0) -- (1,1.5,0) -- (2,1.5,0) -- (2,0.5,0) -- (1,0.5,0) -- (1,0.5,1) -- (2,0.5,1) -- (2,1.5,1) -- (1,1.5,1) -- (1,1.5,0);
		\draw[dashed, okabe2, line width=0.4, opacity=0.8] (2,0.5,0) -- (2,0.5,1);
		\draw[dashed, okabe2, line width=0.4, opacity=0.8] (2,1.5,0) -- (2,1.5,1);

		\fill[pattern={Lines[angle=-45,distance=3]}, pattern color=okabe2, opacity=0.2] (1,0.5,0) -- (2,0.5,0) -- (2,0.5,1) -- (1,0.5,1) -- cycle;

		\draw[okabe2, line width=1] (1,0.5,1) -- (1,1.5,1) -- (2,1.5,1) -- (2,0.5,1) -- cycle;
	}{}

	\ifdef{\probetwo}{
		\draw[dashed, okabe3, line width=0.4, opacity=0.8] (0.5,0,0) -- (0.5,1,0) -- (1.5,1,0) -- (1.5,0,0);
		\draw[dashed, okabe3, line width=0.4, opacity=0.8] (0.5,1,0) -- (0.5,1,1);
		\draw[dashed, okabe3, line width=0.4, opacity=0.8] (1.5,1,0) -- (1.5,1,1);

		\fill[pattern={Lines[angle=-45,distance=3]}, pattern color=okabe3, opacity=0.2] (1,0,0) -- (1.5,0,0) -- (1.5,0,1) -- (1,0,1) -- cycle;
		
		\draw[okabe3, line width=1] (0.5,0,0) -- (1.5,0,0) -- (1.5,0,1) -- (0.5,0,1) -- (0.5,1,1) -- (1.5,1,1) -- (1.5,0,1);
		\draw[okabe3, line width=1] (0.5,0,0) -- (0.5,0,1);
	}{}

	\ifdef{\probethree}{
		\draw[dashed, okabe4, line width=0.4, opacity=0.8] (1.5,0,0.5) -- (1.5,0.5,0.5) -- (2,0.5,0.5) -- (2,0,0.5);
		\draw[dashed, okabe4, line width=0.4, opacity=0.8] (1.5,0.5,0.5) -- (1.5,0.5,1);
		\draw[dashed, okabe4, line width=0.4, opacity=0.8] (2,0.5,0.5) -- (2,0.5,1);

		\fill[pattern={Lines[angle=-45,distance=3]}, pattern color=okabe4, opacity=0.2] (1.5,0,0.5) -- (2,0,0.5) -- (2,0,1) -- (1.5,0,1) -- cycle;
		\fill[pattern={Lines[angle=-45,distance=3]}, pattern color=okabe4, opacity=0.2] (1.5,0,1) -- (1.5,0.5,1) -- (2,0.5,1) -- (2,0,1) -- cycle;

		\draw[okabe4, line width=1] (1.5,0,0.5) -- (2,0,0.5) -- (2,0,1) -- (1.5,0,1) -- cycle;
		\draw[okabe4, line width=1] (1.5,0,1) -- (1.5,0.5,1) -- (2,0.5,1) -- (2,0,1);
	}{}

	\ifdef{\showsubdominos}{
		\draw[dashed, okabe5, line width=0.4, opacity=0.8] (1,0,0) -- (1,1,0) -- (2,1,0) -- (2,1,0.5) -- (1,1,0.5) -- (1,0,0.5);
		\draw[dashed, okabe5, line width=0.4, opacity=0.8] (2,0,0.5) -- (2,1,0.5);
		\draw[dashed, okabe5, line width=0.4, opacity=0.8] (1,1,0) -- (1,1,0.5);
		\draw[dashed, okabe5, line width=0.4, opacity=0.8] (1,0.5,0) -- (2,0.5,0) -- (2,0.5,0.5) -- (1,0.5,0.5) -- cycle;
		\draw[dashed, okabe5, line width=0.4, opacity=0.8] (1.5,0,0) -- (1.5,1,0) -- (1.5,1,0.5) -- (1.5,0,0.5);

		\draw[okabe5, line width=1] (1,0,0) -- (2,0,0) -- (2,0,0.5) -- (1,0,0.5) -- cycle;
		\draw[okabe5, line width=1] (1.5,0,0) -- (1.5,0,0.5);
	}{}

	\node at (1.8,0.15,0.1) {\tiny \xmark};

	\node at (1.88,0,1) {};
\end{tikzpicture}}
	\caption{\label{fig:4-domino}
		The recursive 4-domino procedure as used in the 3D domino algorithm.
		Each probe reduces the remaining volume by a factor of 2.
		Regardless of the results of these three probes, depicted in blue,
		orange, and green, respectively, we are left with a new 4-domino where each dimension
		is halved, depicted in yellow.
	}
\end{figure}
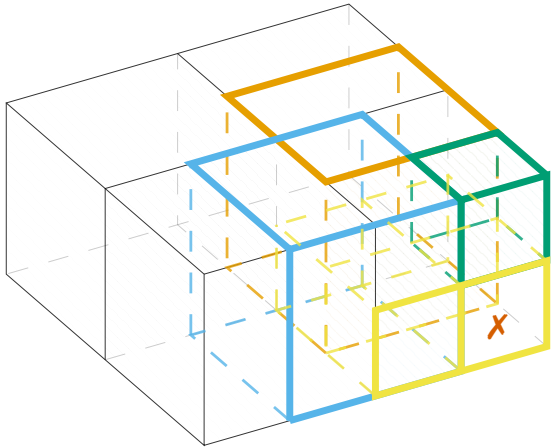

We have shown that once we reach a 4-domino shape, we are able to perform
optimally, reducing the remaining volume by a factor of 2 with every probe.
Unfortunately, there is no simple procedure by which we efficiently reach
a 4-domino shape from the original search area.
In \Cref{sec:3d-domino}, we show how by using
% To prove good results, we require 
an additional construction, similar to the
2D 2-domino procedure, as an intermediate step, we are able to prove good results.
% Nevertheless, we show in an appendix that this algorithm is possible,
% requiring
The resulting 3D domino algorithm requires
at most $3 \lceil \log{n} \rceil + 4$ probes in the worst case.

% \ofek{If we do not get to the aforementioned appendix section, should change to:
% we show in the full version of this paper}

\subsection{The Central Binary Search Algorithm}
Our domino algorithms were able to achieve excellent probe complexities of $2
\lceil \log{n} \rceil + \mathcal{O}(1)$ in 2D and $3 \lceil \log{n} \rceil +
\mathcal{O}(1)$ in 3D.
These algorithms, however, make no attempt to minimize the distance traveled by
the search point, $\Delta$, which may be important
in practice 
in a real-world scenario.
In this section we discuss
% two algorithms 
a 2D algorithm
that not only minimizes the number of
probes, $P(n)$, but also minimizes the distance traveled by $\Delta$, $D(n)$.
More specifically, we will show how
% our algorithms are 
our algorithm is
instance-optimal with
respect to distance, which we define as having $\Delta$ travel a distance of at
most $D(n) \in \mathcal{O}(\delta_\text{min})$, where $\delta_\text{min}$ is the
distance of the closest POI to the origin using either the $L_1$ or
$L_\infty$ metrics.
This algorithm performs two binary searches from the center of each dimension,
so we
% These algorithms use 
% a binary search technique on each dimension, and so we
refer to
% them
it 
as the
2D
\textbf{central binary search}
algorithm.

Our algorithm can be thought
of in two distinct `phases', one for a binary search in 2D, and another for
a binary search in a `1D' edge, as depicted in
\Cref{fig:cbs-2d,fig:cbs-2d-1d}, respectively.
In a later section, we will show how this algorithm can be generalized not only
to three dimensions, but also beyond.

For the first phase, we perform a sequence of probes from the origin, without
moving $\Delta$, performing a binary search to find a width-1 shell containing
the nearest POI, as described in \Cref{alg:true-bin}.
This binary search takes at most $\lceil \log n \rceil$ probes.
Afterwards, we determine which one of the four edges of the shell contains a
POI, which can be performed with two additional probes, as depicted in
\Cref{fig:cbs-2d}.
While the probe does not need to move by much to perform these two probes, it
may need to move 1 or 2 units along the $x$ and $y$ axes, such as to only probe
a desired subset of the shell's edges.
We go into more detail about this in a later section.

% \subsection*{Step 1: Finding the Shell}

% \begin{algorithm}[H]
% \caption{Binary Search for Shell Radius}
% \begin{algorithmic}[1]
% \STATE Initialize \( \ell = 0 \), \( h = n \)
% \WHILE{ \( h - \ell > 1 \) }
% 	\STATE Compute:
% 	\[
% 	r = \left\lfloor \sqrt{\frac{h^2 + \ell^2}{2}} \right\rfloor
% 	\]
% 	\STATE Query \( p(0, 0, r) \)
% 	\IF{Query returns \textbf{YES}}
% 		\STATE Set \( h = r \)
% 	\ELSE
% 		\STATE Set \( \ell = r \)
% 	\ENDIF
% \ENDWHILE
% \STATE \textbf{Output:} \( r = h \)
% \end{algorithmic}
% \end{algorithm}

% This binary search halves the \textbf{area} of the search region at each step by interpolating between the squares' areas.

\begin{algorithm}[hbt]
	\caption{Binary Search for the Initial Shell Radius}\label{alg:true-bin}
	\begin{algorithmic}[1]
		% \State{\textbf{Input:} array $A$ of size $n$. Merge policy $\mathcal P$.}
		\State{\textbf{Output:} approximate distance $\tilde \delta$ to nearest
		POI, such that $\tilde \delta - 1 \leq \delta_\text{min} \leq \tilde \delta$.}

		\State $l \gets 0$, $h \gets n$ \Comment{lower and upper bounds}

		\While{$h - l > 1$}
			\State $m \gets \left \lfloor \frac{h + l}{2} \right \rfloor$
			\If{$p(0,0,m)$ succeeds}
				\State $h \gets m$ \Comment{POI is in the shell}
			\Else
				\State $l \gets m$ \Comment{POI is outside the shell}
			\EndIf
		\EndWhile

		\State \Return $h$
	\end{algorithmic}
\end{algorithm}

\begin{figure}[ht!]
	\centering
	\resizebox*{.8\linewidth}{!}{\trimbox{0 3 0 0}{% 2D Central Binary Search Algorithm

\begin{tikzpicture}
	\node at (1.36,1.67) {\tiny \xmark};
	\node at (1.7,0.35) {\tiny \xmark};
	\node at (0.8,0.1) {\tiny \xmark};

	% third probe
	\draw[line width=0.25, okabe4] (0.375,0.375) rectangle (1.625,1.625);
	\fill[pattern={Lines[angle=-45,distance=6]}, pattern color=okabe4, opacity=0.7] (0.375,0.375) rectangle (1.625,0.5);
	\fill[pattern={Lines[angle=-45,distance=6]}, pattern color=okabe4, opacity=0.7] (0.375,0.5) rectangle (0.5,1.625);
	\fill[pattern={Lines[angle=-45,distance=6]}, pattern color=okabe4, opacity=0.7] (1.5,0.5) rectangle (1.625,1.625);
	\fill[pattern={Lines[angle=-45,distance=6]}, pattern color=okabe4, opacity=0.7] (0.5,1.5) rectangle (1.5,1.625);

	% second probe
	\draw[line width=0.5, okabe2] (0.25,0.25) rectangle (1.75,1.75);

	% first probe
	\fill[pattern={Lines[angle=-45,distance=6]}, pattern color=okabe3, opacity=0.7] (0.5,0.5) rectangle (1.5,1.5);
	\draw[line width=1, okabe3] (0.5,0.5) rectangle (1.5,1.5);

	\draw[line width=1] (0,0) rectangle (2,2);

	% lines to bottom
	\draw[densely dotted,color=okabe4,line width=0.25] (1.625,0.375) -- (1.625,-0.06) node[below, yshift=2pt] {\tiny 3};

	\draw[densely dotted,color=okabe2, line width=0.5] (1.75,0.25) -- (1.75,-0.08) node[below, yshift=2pt] {\tiny 2};
	\draw[densely dotted,color=okabe3,line width=1] (1.5,0.5) -- (1.5,-0.1) node[below, yshift=2pt] {\tiny 1};

	\draw[decorate,decoration={brace}] (1.625,1.625) -- (1.75,1.625) node[midway,above,font=\tiny] {1};

	\fill[okabe8] (1,1) circle (0.04);

	\begin{scope}[scale=0.75, shift={(2.65,1.35)}]
		\node at (1.36,1.67) {\tiny \xmark};
		\node at (1.7,0.35) {\tiny \xmark};
		% \node at (0.8,0.1) {\tiny \xmark};

		% third probe
		\draw[line width=0.25, okabe4] (0.375,0.375) rectangle (1.625,1.625);
		\fill[pattern={Lines[angle=-45,distance=6]}, pattern color=okabe4, opacity=0.3] (0.375,0.375) rectangle (1.625,1.625);

		% second probe
		\draw[line width=0.5, okabe2] (0.25,0.25) rectangle (1.75,1.75);

		% fourth probe
		\draw[line width=0.75, okabe7, opacity=0.8] (0.375,0.375) rectangle (1.75,1.75);

		% \draw[->, line width=0.25] (1,1) -- (1.125,1.125);
		% \fill[okabe8] (1,1) circle (0.04);
		% \fill[okabe8] (1.125,1.125) circle (0.04);

		\node[okabe7] at (1.125,1.125) {\large \textbf{4}};

		\node (a) at (0.25,1) {};
		\node (b) at (1,0.25) {};
	\end{scope}

	\begin{scope}[scale=0.75, shift={(2.65,-0.65)}]
		\node at (1.36,1.67) {\tiny \xmark};
		\node at (1.7,0.35) {\tiny \xmark};
		% \node at (0.8,0.1) {\tiny \xmark};

		% third probe
		\draw[line width=0.25, okabe4] (0.375,0.375) rectangle (1.625,1.625);
		\fill[pattern={Lines[angle=-45,distance=6]}, pattern color=okabe4, opacity=0.3] (0.375,0.375) rectangle (1.625,1.625);

		% second probe
		\draw[line width=0.5, okabe2] (0.25,0.25) rectangle (1.75,1.75);

		% fourth probe
		\draw[line width=0.75, okabe7, opacity=0.8] (0.375,0.375) rectangle (1.75,1.75);

		% fifth probe
		\fill[pattern={Lines[angle=-45,distance=6]}, pattern color=okabe3, opacity=0.7] (1.625,0.375) rectangle (1.75,1.625);
		\draw[line width=1, okabe3] (0.5,0.375) rectangle (1.75,1.625);

		% \draw[->, line width=0.25] (1,1) -- (1.125,1.125);
		% \fill[okabe8] (1,1) circle (0.04);
		% \fill[okabe8] (1.125,1.125) circle (0.04);

		% \node[okabe7] at (1.125,1.125) {\large 4};
		\node[okabe3] at (1.1875,1.0625) {\large \textbf{5}};
		\node (c) at (1,1.75) {};
	\end{scope}

	\draw[line width=0.2, ->] (2,1) -- (a.center);
	\draw[line width=0.2, ->] (b.center) -- (c.center);
\end{tikzpicture}}}
	\caption{\label{fig:cbs-2d}
		The 2D central binary search algorithm.
		The probe is always centered at the origin, depicted by a purple point,
		but varies in size, increasing after every failed probe, and decreasing
		after every successful probe.
		It continues performing this binary search until reaching the closest
		width-1 shell to the origin where a POI is known to exist.
		After this, only two additional probes are needed to determine which of
		the four edges of the shell contains a POI.
	}
\end{figure}
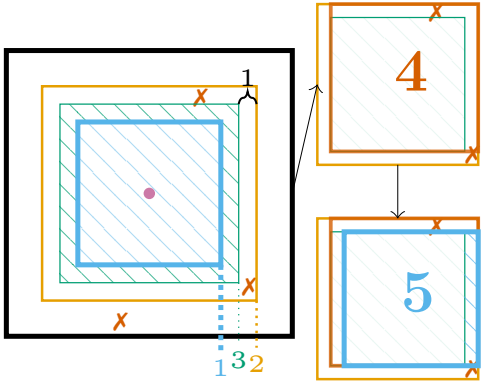

After this, we have reduced the problem to a 1D search along an edge
of the original shell.
A similar binary search is performed to find the two width-1 squares nearest to
the center of the edge which are known to contain a POI.
While this binary search also takes at most $\lceil \log n \rceil$ probes, it
may require $\Delta$ to move back and forth along the line from the origin and
the center of the edge.
See \Cref{fig:cbs-2d-1d}.
The first probe is performed at the center of this line, and requires the probe
to move $\tilde \delta/2$ units from the origin under both the $L_1$ and
$L_\infty$ metrics.
The second probe will be performed $\tilde \delta/4$ units away, either back
towards the origin in the case of a failed probe, or towards the edge in the
case of a successful probe.
In fact, each subsequent probe will require moving $\Delta$ half the distance
as the previous, and so the total distance traveled by $\Delta$ to perform this
search is at most $\tilde \delta$.
One final probe then determines which of the two width-1 squares
contains a POI, which again requires moving $\Delta$ only a small
constant distance.
And finally, in order to reach the location of the POI, $\Delta$ must
move a final distance of at most $\tilde \delta$.

\begin{figure}[ht!]
	\centering
	\resizebox*{\linewidth}{!}{\trimbox{0 0 0 0}{\begin{tikzpicture}
	\begin{scope}[]
		\node at (1.36,1.67) {\tiny \xmark};

		\fill[pattern={Lines[angle=-45,distance=6]}, opacity=0.3] (0.5,0.5) rectangle (1.75,1.625);

		\draw[line width=0.2] (0.5, 0.5) rectangle (1.75,1.75);
		\draw[line width=0.2] (0.5,1.625) -- (1.75,1.625);
		\draw[line width=0.3, densely dotted] (1.125,0.5) -- (1.125,1.625);

		\draw[line width=0.5, okabe3] (0.8125,1.125) rectangle (1.4375,1.75);

		\fill[okabe8] (1.125,1.4375) circle (0.04);

		% \node[okabe3] at (1.125,1.4375) {\large 1};
		\node (a) at (1.75,1.125) {};
	\end{scope}

	\begin{scope}[shift={(1.5,0)}]
		\node at (1.36,1.67) {\tiny \xmark};

		\fill[pattern={Lines[angle=-45,distance=6]}, opacity=0.3] (0.5,0.5) rectangle (1.75,1.625);

		\draw[line width=0.2] (0.5, 0.5) rectangle (1.75,1.75);
		\draw[line width=0.2] (0.5,1.625) -- (1.75,1.625);
		\draw[line width=0.3, densely dotted] (1.125,0.5) -- (1.125,1.625);
		
		\draw[line width=0.5, okabe3] (0.8125,1.125) rectangle (1.4375,1.75);
		
		% second probe has side length 0.3125
		\fill[pattern={Lines[angle=-45,distance=6]}, pattern color=okabe2, opacity=0.9] (0.96875,1.625) rectangle (1.28125,1.75);
		\draw[line width=0.5, okabe2] (0.96875,1.4375) rectangle (1.28125,1.75);

		\fill[okabe8] (1.125,1.59375) circle (0.04);

		% \draw[decorate,decoration={brace}] (1.28125,1.75) -- (1.4375,1.75) node[midway,above,font=\tiny] {1};

		\node (b) at (0.5,1.125) {};
	\end{scope}

	\draw[line width=0.2, ->] (a.center) -- (b.center);
\end{tikzpicture}}}
	\caption{\label{fig:cbs-2d-1d}
		The second phase of the 2D CBS algorithm,
		where a POI is known to exist on a width-1 edge.
		One more binary search is conducted, where the probe moves along a
		line perpendicular to the center of the edge.
		% , depicted by a black dotted
		% line.
		After this, one final probe is necessary to determine which of the two
		1-by-1 regions contains a POI.
	}
\end{figure}
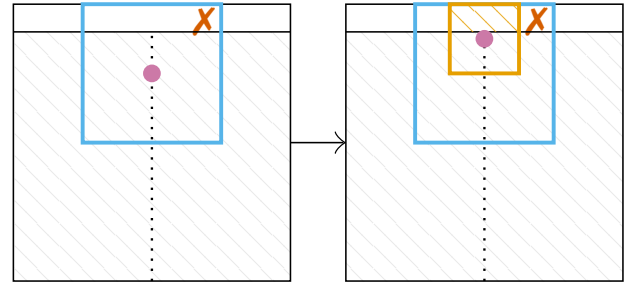

Our algorithm may require up to $2 \lceil \log n \rceil + 3$
probes to find a width-1 square containing a POI, but recall that our
objective is to reach a position within a \textit{distance} of 1 of our POI.
In truth, we can loosen our requirements to only finding a width-\textit{2}
square containing a POI, such that its center is at most 1 unit away
from the edges.
Each binary search will then take one fewer probe, resulting in a total of
$2 \lceil \log n \rceil + 1$ probes.
And as desired, the total distance traveled by $\Delta$ is at most
$2 \delta_\text{min} + \mathcal{O}(1)$, where $\delta_\text{min}$ is the distance to the nearest POI.

\section{Extended Search Strategies}\label{sec:general}

In this section, we expand on the key results of the previous section and show
how to generalize them to higher dimensions.

\subsection{Orthant Algorithm} \label{sec:orthant-algorithms}
As a warmup, we first extend our quadrant and octant algorithms to operate in
higher dimensions.
In this context, an \textbf{orthant} is the higher-dimensional analogue of a
quadrant (in 2D) or octant (in 3D), where each $k$-dimensional hypercube has
$2^k$ orthants.
By searching at most $2^k - 1$ orthants per layer, we can guarantee that we will
find a POI within at most $(2^k - 1) \lceil \log{n} \rceil$ probes,
where $k$ is the dimension of our search space,
while receiving at most $\lceil \log{n} \rceil$ responses.
To reduce the total distance traveled, we always probe adjacent orthants, which
can be done by following a Gray code, sometimes known as a single-digit
code~\cite{frank1953pulse}.
The distance between the centers of two adjacent orthants in the original
hypercube is $n$ under both the $L_1$ and $L_\infty$ metrics%
, so in the first layer alone we may travel a distance of over
$2^k n$.
Certainly such an algorithm should not be used when distance traveled is of any
concern.

% \ofek{Technically, the distance between the centers of \textit{any} two orthants
% is $n$ regardless on whether they are adjacent, under both distance metrics. We
% do not need to mention Gray codes or single-digit codes here. I still think
% searching adjacent orthants seems more natural, especially considering we are
% being a bit lax with our figures. Also I think it sounds cool. Open to removing
% it though.}

\subsection{Generalized Central Binary Search (CBS) Algorithm} \label{sec:gen-cbs}
In this section we extend our 2D CBS algorithm to $k$ dimensions under the
$L_\infty$ metric.
Consider our true objective of a search strategy: to find a point in space that
is within distance 1 of a POI.
In $k$-dimensional space, this means a point with coordinates $x_1, \dots, x_k$,
where each coordinate is within distance 1 of the corresponding coordinate of a
POI.
As with the 2D case, we will split our algorithm into separate phases, where in
each phase we will fix the coordinates of at least one of the dimensions, thus
effectively reducing the dimension of our search space by 1.
Let phase $p$ refer to the phase in which our search space is effectively
confined to a $p$-dimensional subspace.
After at most $k$ phases, we will have fixed all $k$ coordinates, and
therefore we will have found a POI.
% 
% In order to generalize our central binary search algorithm to higher dimensions,
% we split our algorithm into $k$ phases, one for each dimension.
As in the 2D case, each phase $p$ starts by a binary search for the
(approximate) radius, $\tilde{\delta_p}$, of the smallest cube that contains at
least one POI.
Recall that each binary search takes at most $\lceil \log{n} \rceil$ probes.
While $\Delta$ is stationary in our $p$-dimensional subspace during each
binary search, for any phase $p < k$, $\Delta$ may need to move up to
$\tilde{\delta_p} < \delta_\text{min} + 1$ units overall (in the original
$k$-dimensional space) to conduct the search.
Regarding distance traveled, during the first phase, when $p = k$, $\Delta$ can
remain at the origin, while for all subsequent phases, $\Delta$ may need to move
up to $\tilde{\delta_p} < \delta_\text{min} + 1$ units.
This movement is necessary since although we visualize the search as oc
After this search, we are guaranteed that at least one POI contains a coordinate
that is within distance 1 of $\pm \tilde{\delta_p}$.
% What remains is to determine which.

In $p$ dimensions, after the binary search determines our radius
$\tilde{\delta_p}$, we consider a $p$-cube with radius $\tilde{\delta_p}$, where
we know that the POI is located within distance 1 of one of the cubes' facets.
In our 2D algorithm, we were able to determine not only which coordinate to set
(the $x$ or the $y$), but also which sign to set ($\pm$) using only two probes.
The first probe eliminated two of the four edges, and the second probe
eliminated one of the two remaining edges.
See \Cref{fig:cbs-2d} (right).
Intuitively, it may seem possible to extend this idea to our $p$-cube,
halving the number of facets we consider with each probe.
Doing so would allow us determine the facet in $\log(2p)$ probes, but this is
unfortunately not possible in general.

% However, we show in an appendix that not only is this not possible in general
% and we need to individually

% Intuitively, it may seem possible to generalize this idea to $k$ dimensions,
% continuously halving the number of facets of the 

% Recall that each binary search takes at most $\lceil \log{n} \rceil$ probes, and
% while $\Delta$ does not need to move for the first binary search, it may need to
% move up to $\tilde \delta < \delta_\text{min} + 1$ units for the remaining $k -
% 1$ binary searches, where $\tilde \delta$ is the approximate radius of the
% smallest hypercube containing the POI during our current phase.
% At least one of our $k$ coordinates will be set to $\tilde \delta$ during this
% phase.

% In our 2D algorithm, each binary search resulted in a width-1 shell, containing
% 4 edges after the first search, and 2 points after the second search.
% In general, a $k$-dimensional search will result in a width-1 shell with $2k$
% `facets'.
% In our 2D example, we first probed the positive $x$ and $y$ axes, eliminating two
% edges concurrently, and then we only probed the positive $x$ edge.
% See \Cref{fig:cbs-2d} (right).
% Intuitively, it may seem possible to probe facets corresponding to arbitrary
% subsets of axes, performing effectively a binary search over the facets, but this
% is unfortunately not possible in general, as we show in an appendix.
% Consequently,
% In general, while we can concurrently probe $k$ facets, one from \textit{every}
% axis, it is \textit{not} possible to probe arbitrary subsets of $k$ facets.

One way to build this intuition is to consider the corners of our
$p$-cube.
% $k$-dimensional shell of radius $\tilde \delta$.
The distance between any two corners is $2 \tilde{\delta_p}$, so any probe
that tests for the presence of a POI in 
% % at least
two corners
concurrently must have a radius of at least $\tilde{\delta_p}$.
However, such a probe, when initiated from the center of the cube, will
encompass it entirely and not providing any new information%
, and when initiated from any other point, will
include regions outside of it which may contain other POIs.
Therefore, in the case of multiple possible POIs, we cannot probe
multiple corners of the cube concurrently.
This is unfortunate, since a $p$-cube has $2^p$ corners,
and since each of its facets contains $2^{p - 1}$ corners, if the POI is located
within distance 1 of a corner, we may need to probe $2^{p - 1}$ corners before
determining which facet contains the POI.
% each of the shells' $k-1$ dimensional faces contain $2^{k - 1}$ corners, if the
% POI is located on a corner, we may need to probe $2^{k - 1}$ corners before
% determining which face contains the POI.
With this unfortunate observation in mind, not only do we need to individually
probe each facet, but our probes must be smaller than $\tilde{\delta_p}$, such
that each probe may not include the (lower dimensional) boundary of the facet.
In the case of $3$D cubes for example, each probe may not cover the edges nor
corners of the face being probed.
Therefore, we resort to not only individually probing each of the $2p$ facets,
but also the lower-dimensional faces, as described in
% 
% we resort to individually probing each of the $2p$ facets, and then
% the $(p-2)$-dimensional faces, (in case all $(p-1)$-dimension were known to be empty after querying them), and so on, as described in
\Cref{alg:face-recurse}.
% While it is possible to improve upon this algorithm in the case where it is
% known that there is only one POI, it is impossible to asymptotically
% improve upon this algorithm in general.

% we propose the following algorithm
% for determining how to continue our search for the POI after the shell is
% determined in each phase.

\begin{algorithm}[htb]
	\caption{Finding Cube Face near POI}\label{alg:face-recurse}
	\begin{algorithmic}[1]
		\State{\textbf{Output:} highest-dimensional $p$-cube face $f$ near
		at least one POI.}

		\For{$a = 1, \dots, p$}
			\For{$f \in (p - a)$-dimensional cube faces}
				\If{\Call{ProbeFace}{$f$}} \label{line:probe-face}
					\State \Return $f$ \Comment{POI is near face}
				\EndIf
			\EndFor
		\EndFor
	\end{algorithmic}
\end{algorithm}

The only remaining detail is the \texttt{ProbeFace($f$)} operation
% (line~4 of
% \Cref{alg:face-recurse}) 
for an arbitrary $(p-a)$-dimensional face $f$.

\begin{lemma} \label{lem:probe-face}
	Consider a $(p-a)$-dimensional face $f$ of a $p$-cube of radius
	$\tilde{\delta_p}$ centered at the origin.
	% , for an integer $1 \leq a \leq p$.
	$f$ can be defined by a unique vector $s \in \{-1, 0, 1\}^p$ of length $p$,
	where `$a$' values are fixed to $\pm 1$, and the remaining values are 0,
	such that
	any point $x$ of $f$ satisfies $s \cdot x = a \tilde{\delta_p}$.
	\texttt{ProbeFace($f$)} can be performed by moving $\Delta$ to coordinate
	$s$ and conducting a probe with radius $\tilde{\delta_p} - 1$.
\end{lemma}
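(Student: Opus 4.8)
The plan is to handle the two assertions of the lemma in turn: first the algebraic characterization of the face by the sign vector $s$, then the correctness of the stated probe as an implementation of \texttt{ProbeFace}. For the first part, I would recall that a $(p-a)$-dimensional face of the cube $[-\tilde{\delta_p},\tilde{\delta_p}]^p$ is obtained by fixing $a$ of the coordinates to one of the two extreme values $\pm\tilde{\delta_p}$ and letting the remaining $p-a$ coordinates range freely over $[-\tilde{\delta_p},\tilde{\delta_p}]$. Choosing the fixed coordinates together with their signs is exactly the data of a vector $s\in\{-1,0,1\}^p$ with exactly $a$ nonzero entries, and this correspondence is a bijection, giving the claimed uniqueness. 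For any $x\in f$ we have $x_i = s_i\tilde{\delta_p}$ whenever $s_i\neq0$ and $s_i=0$ otherwise, so $s\cdot x=\sum_{i:s_i\neq0}s_i(s_i\tilde{\delta_p})=\sum_{i:s_i\neq0}s_i^2\,\tilde{\delta_p}=a\,\tilde{\delta_p}$; thus $f$ is precisely the intersection of the cube with the supporting hyperplane $s\cdot x=a\tilde{\delta_p}$.

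For the second part, I would first write out the region tested by the probe. Placing $\Delta$ at $s$ and probing to $L_\infty$-radius $\tilde{\delta_p}-1$ detects exactly the POIs $y$ in the axis-aligned box $\prod_i[\,s_i-(\tilde{\delta_p}-1),\,s_i+(\tilde{\delta_p}-1)\,]$. I would then read off this box coordinate by coordinate: for a pinned coordinate ($s_i=\pm1$) it restricts $s_i y_i\in[\,2-\tilde{\delta_p},\,\tilde{\delta_p}\,]$, forcing $y_i$ to agree in sign with $s_i$ and to lie near the extreme value $s_i\tilde{\delta_p}$, while for a free coordinate ($s_i=0$) it restricts $|y_i|\le\tilde{\delta_p}-1$. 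The key observations to record are (i) this box is contained in the cube $[-\tilde{\delta_p},\tilde{\delta_p}]^p$, since $2-\tilde{\delta_p}\ge-\tilde{\delta_p}$, so the probe never fires on a point outside the search cube; and (ii), assuming $\tilde{\delta_p}\ge 3/2$ (the regime where the radius is meaningfully positive; smaller radii are a trivial base case), the interval $[\tilde{\delta_p}-1,\tilde{\delta_p}]$ is contained in $[2-\tilde{\delta_p},\tilde{\delta_p}]$.

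With the box in hand, I would verify the two directions that make this a correct \texttt{ProbeFace}. For completeness, any POI whose $L_\infty$ distance to the origin lies in $[\tilde{\delta_p}-1,\tilde{\delta_p}]$ (the shell isolated by \Cref{alg:true-bin}), whose near-maximal coordinates are exactly the pinned set of $f$ with matching signs, and whose remaining coordinates satisfy $|y_i|\le\tilde{\delta_p}-1$, lies in the box by (ii), so the probe fires. For soundness, the free-coordinate constraints $|y_i|\le\tilde{\delta_p}-1$ are what exclude the unwanted POIs: a POI near a proper sub-face of $f$ has some coordinate that is free for $f$ but near its own extreme (magnitude above $\tilde{\delta_p}-1$), hence falls outside the box, and likewise a POI near a higher-dimensional face than $f$ has an additional near-extreme coordinate that $f$ leaves free and is therefore excluded. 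This is exactly the property anticipated in the surrounding text---the probe ``may not include the boundary of the facet''---and it is what lets \Cref{alg:face-recurse}, which scans faces from highest to lowest dimension, return the correct highest-dimensional face.

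I expect the main obstacle to be pinning down the precise meaning of ``near face $f$'' so that the completeness and soundness statements mesh cleanly with the high-to-low-dimension scan, and in particular handling the knife-edge case where a free coordinate equals exactly $\tilde{\delta_p}-1$: there a POI could be detected both at $f$ and at a face of one higher dimension, so I would either argue such ties are resolved consistently by the scan order or absorb them into the margin by treating one of the inequalities as strict. A secondary point to check is that multiple POIs cause no trouble: since the box lies inside the cube and the binary search guarantees a POI in the shell, whenever the probe fires there is a genuine POI near the relative interior of the probed face toward which the algorithm can safely recurse, which is all that correctness requires.
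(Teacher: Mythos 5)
Your face-characterization argument, the containment claim, and the completeness claim all line up with the paper's proof (the paper proves containment by contradiction, you by interval inclusion---same content; the sign-vector bijection you prove is simply asserted there). The genuine problem is in your soundness step. You claim that ``a POI near a higher-dimensional face than $f$ has an additional near-extreme coordinate that $f$ leaves free and is therefore excluded'' from the box. This is false whenever the higher-dimensional face is a \emph{super-face} of $f$, i.e., its pinned set is a subset of $f$'s pinned set with matching signs: such a POI's near-extreme coordinates are all pinned by $f$, so no free-coordinate constraint can touch it. Concretely, in 2D let $f$ be the corner with $s=(1,1)$; the probe box is $[2-\tilde{\delta_p},\tilde{\delta_p}]^2$, which contains the point $(\tilde{\delta_p},0)$ whenever $\tilde{\delta_p}\geq 2$---a point lying on the facet $x_1=\tilde{\delta_p}$ and nowhere near the corner (and here $f$ has no free coordinates at all, so your claimed mechanism is vacuous). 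Consequently your final assertion that ``whenever the probe fires there is a genuine POI near the relative interior of the probed face,'' justified only by the box lying inside the cube and the shell guarantee, does not follow from what you proved.

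The missing idea is exactly how the paper argues its third probe property: geometric exclusion is only required (and only available) for faces of equal or lower dimension than $f$; for super-faces of $f$, soundness comes not from the geometry of the box but from the scan order of \Cref{alg:face-recurse}. By the time $f$ is probed, every higher-dimensional face has already been probed negatively, so the portion of $f$'s box lying near a super-face is contained in a region already known to be empty, and a firing probe therefore certifies a POI near $f$ itself. You do invoke the high-to-low scan, but only for tie-breaking, when in fact it is load-bearing for soundness. Two smaller points: your taxonomy (proper sub-faces and higher-dimensional faces) omits equal-dimensional faces $f'\neq f$; those with a pinned position free in $f$ are excluded by the free-coordinate constraint, but those with the same pinned positions and an opposite sign require the pinned-coordinate constraint $s_i y_i \geq 2-\tilde{\delta_p}$ instead (the paper's own case analysis also glosses over this sign case, and it is easily patched). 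Your knife-edge worry about $|y_i|=\tilde{\delta_p}-1$ is harmless: a POI detected at such a tie has all of $f$'s pinned coordinates within $1$ of their extremes, so recursing on $f$ is correct either way.
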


Using this lemma, which we prove in \Cref{sec:generalized-cbs}, we are able to probe each face
by moving $\Delta$ at most one unit from the origin under the $L_\infty$ metric
(since the maximum absolute value of $s$ is 1).
Since the number of faces required to probe is independent of $n$, we
obtain that the total number of probes required is at most
\begin{equation}
	k \lceil \log n \rceil + g(k),
\end{equation}
for some function $g(k)$ independent of $n$, and the total distance traveled is
\begin{equation}
	D(n) \leq k \cdot \delta_{\min} + 2g(k),
\end{equation}
with the factor of 2 accounting for the fact that $\Delta$ must return to the
origin after each face probe.
In other words, for any constant $k$, we get
\[
P(n) \leq k \lceil \log n \rceil + \mathcal{O}(1) \quad \text{and} \quad D(n) \leq k \cdot \delta_{\min} + \mathcal{O}(1).
\]
In the appendix, we show that $g(k) < 3^k$ and
discuss several realistic assumptions under which we can bound $g(k)$ to just
$k(k + 1)$, as long with experimental evidence supporting this going up to $k =
8$.
% discuss further optimizations to
% the algorithm, such as using a volume-halving strategy rather than
% radius-halving for binary search.
The biggest bottleneck for adapting CBS for $k > 2$ dimensions to the $L_1$
metric is that the faces of cross-polytopes are not, in general, other
cross-polytopes, although we conjecture that a similar algorithm can be
developed.

\subsection{Input-Sensitive Probe Complexity} \label{sec:input-sensitive-p}
There may be a scenario where the probes themselves are very expensive, and we
may not have a good estimate for the distance $\delta_\text{min}$ to the nearest
POI.
In this case, $\Delta$ can start by performing an exponential search from the
origin, increasing the probe radius by a factor of 2 per probe, until the first
successful probe.
Doing so requires at most $\lceil \log{\delta_\text{min}} \rceil$ probes,
and would limit the initial search area radius to $\delta_\text{min} \leq n < 2
\delta_\text{min}$, from which we can use our preferred algorithm to find the
POI.
For example, when using the central binary search algorithm, we achieve $P(n)
\leq (k + 1) \lceil \log{\delta_\text{min}} \rceil + \mathcal{O}(1)$.

\renewcommand{\emph}[1]{\textit{#1}}

\bibliographystyle{plainurl} 

\bibliography{refs}

% \clearpage
% 
% Define it back again for the appendix
%
\renewcommand{\emph}[1]{\textbf{\textit{#1}}}

\appendix

\section{Omitted Details and Results} \label{sec:omitted-results}

In this section, we provide more details and results omitted from the main
text.

\subsection{The 3D Domino Algorithm} \label{sec:3d-domino}
In the main text, we described how the 3D domino algorithm can indefinitely half
the remaining search volume for every probe once reaching a configuration we
refer to as a 4-domino.
In this configuration, we have four equally sized $d \times d \times d$ cubes,
where a POI is known to exist in one, and all the rest are known to be
empty, with the cubes arranged in a 2x2x1 grid.
See \Cref{fig:4-domino-simple}.
The key step glossed over in the main text is how to efficiently reach this
configuration.

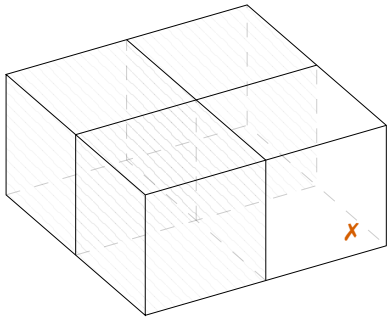
\begin{figure}[th!]
	\centering
	\resizebox*{0.7\linewidth}{!}{\tdplotsetmaincoords{60}{-30}
\begin{tikzpicture}[tdplot_main_coords]
	% fill each face with a very light opacity color, okabe1, 0.1 opacity
	\fill[pattern={Lines[angle=-45,distance=3]}, pattern color=okabe1, opacity=0.2] (0,0,0) -- (1,0,0) -- (1,2,0) -- (0,2,0) -- cycle;
	\fill[pattern={Lines[angle=-45,distance=3]}, pattern color=okabe1, opacity=0.2] (2,1,0) -- (2,1,1) -- (2,2,1) -- (2,2,0) -- cycle;
	\fill[pattern={Lines[angle=-45,distance=3]}, pattern color=okabe1, opacity=0.2] (0,2,0) -- (2,2,0) -- (2,2,1) -- (0,2,1) -- cycle;
	\fill[pattern={Lines[angle=-45,distance=3]}, pattern color=okabe1, opacity=0.2] (0,1,0) -- (1,1,0) -- (1,1,1) -- (0,1,1) -- cycle;
	\fill[pattern={Lines[angle=-45,distance=3]}, pattern color=okabe1, opacity=0.2] (1,0,0) -- (1,2,0) -- (1,2,1) -- (1,0,1) -- cycle;
	\fill[pattern={Lines[angle=-45,distance=3]}, pattern color=okabe1, opacity=0.2] (1,1,0) -- (2,1,0) -- (2,2,0) -- (1,2,0) -- cycle;

	% hidden lines

	% draw a 2x2x1 box
	\draw[dashed, line width=0.1, opacity=0.2] (2,2,0) -- (2,2,1);
	\draw[dashed, line width=0.1, opacity=0.2] (0,2,0) -- (2,2,0) -- (2,0,0);

	% subdivide into 4 cubes
	\draw[dashed, line width=0.1, opacity=0.2] (1,0,0) -- (1,2,0) -- (1,2,1);
	\draw[dashed, line width=0.1, opacity=0.2] (0,1,0) -- (2,1,0) -- (2,1,1);
	\draw[dashed, line width=0.1, opacity=0.2] (1,1,0) -- (1,1,1);

	% visible lines

	% draw a 2x2x1 box
	\draw[line width=0.1] (0,0,0) -- (2,0,0) -- (2,0,1) -- (2,2,1) -- (0,2,1) -- (0,0,1) -- (0,0,0) -- (0,2,0) -- (0,2,1);
	\draw[line width=0.1] (0,0,1) -- (2,0,1);

	% subdivide into 4 cubes
	\draw[line width=0.1] (1,0,0) -- (1,0,1) -- (1,2,1);
	\draw[line width=0.1] (0,1,0) -- (0,1,1) -- (2,1,1);
	
	\node at (1.8,0.15,0.1) {\tiny \xmark};
\end{tikzpicture}}
	\caption{\label{fig:4-domino-simple}
		The configuration referred to as a 4-domino.
		After reaching this configuration, we can indefinitely halve the
		remaining search volume with every probe.
	}
\end{figure}

Perhaps the most natural idea that comes to mind is to split the initial search
cube into its 8 equally sized octants, and then sequentially probe each of them
until we reach a 4-domino configuration.
However, since the 4-domino configuration relies on the fact that 3 regions are
known to be empty, it is possible that no such configuration exists, e.g., if
there exists a POI in every octant.
Recall that our objective with the 3D domino algorithm is to minimize the number
of probes to $3 \lceil \log{n} \rceil + \mathcal{O}(1)$, meaning that we must in
general reduce the search volume by a factor of 8 after every 3 probes.
Each octant has an eighth of the volume of the original search cube, and thus,
as long as one of the first 3 probes succeeds, we still succeed in reducing the
search volume by a factor of 8.
The case of a POI in every octant is therefore a very lucky case, since we are
able to reduce the search volume by a factor of 8 on every probe.
Our goal therefore, is to conduct the first 3 probes in such a way that even if
all three of them fail, we are guaranteed to reach a 4-domino configuration.
Unfortunately, this is not possible.
\Cref{fig:4-domino-fail} depicts one possible scenario.
In fact, no matter where the first 3 probes are conducted, POIs may be
placed adversarially such that after those probes no 4-domino configuration exists.

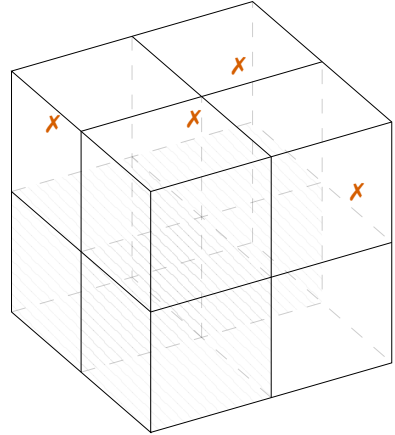
\begin{figure}[th!]
	\centering
	\resizebox*{0.7\linewidth}{!}{\tdplotsetmaincoords{60}{-30}
\begin{tikzpicture}[tdplot_main_coords]
	\begin{scope}[]
		% fill each face with a very light opacity color, okabe1, 0.1 opacity
		\fill[pattern={Lines[angle=-45,distance=3]}, pattern color=okabe1, opacity=0.2] (0,0,0) -- (1,0,0) -- (1,2,0) -- (0,2,0) -- cycle;
		\fill[pattern={Lines[angle=-45,distance=3]}, pattern color=okabe1, opacity=0.2] (2,1,0) -- (2,1,1) -- (2,2,1) -- (2,2,0) -- cycle;
		\fill[pattern={Lines[angle=-45,distance=3]}, pattern color=okabe1, opacity=0.2] (0,2,0) -- (2,2,0) -- (2,2,1) -- (0,2,1) -- cycle;
		\fill[pattern={Lines[angle=-45,distance=3]}, pattern color=okabe1, opacity=0.2] (0,1,0) -- (1,1,0) -- (1,1,1) -- (0,1,1) -- cycle;
		\fill[pattern={Lines[angle=-45,distance=3]}, pattern color=okabe1, opacity=0.2] (1,0,0) -- (1,2,0) -- (1,2,1) -- (1,0,1) -- cycle;
		\fill[pattern={Lines[angle=-45,distance=3]}, pattern color=okabe1, opacity=0.2] (1,1,0) -- (2,1,0) -- (2,2,0) -- (1,2,0) -- cycle;
	
		% hidden lines

		% draw a 2x2x2 box
		\draw[dashed, line width=0.1, opacity=0.2] (2,2,0) -- (2,2,2);
		\draw[dashed, line width=0.1, opacity=0.2] (0,2,0) -- (2,2,0) -- (2,0,0);

		% subdivide into 8 octants
		\draw[dashed, line width=0.1, opacity=0.2] (1,0,0) -- (1,2,0) -- (1,2,2);
		\draw[dashed, line width=0.1, opacity=0.2] (0,1,0) -- (2,1,0) -- (2,1,2);
		\draw[dashed, line width=0.1, opacity=0.2] (0,2,1) -- (2,2,1) -- (2,0,1);
		\draw[dashed, line width=0.1, opacity=0.2] (1,1,0) -- (1,1,2);
		\draw[dashed, line width=0.1, opacity=0.2] (1,0,1) -- (1,2,1);
		\draw[dashed, line width=0.1, opacity=0.2] (0,1,1) -- (2,1,1);

		% visible lines

		% draw a 2x2x2 box
		\draw[line width=0.1] (0,0,0) -- (2,0,0) -- (2,0,2) -- (2,2,2) -- (0,2,2) -- (0,0,2) -- (0,0,0) -- (0,2,0) -- (0,2,2);
		\draw[line width=0.1] (0,0,2) -- (2,0,2);
	
		% subdivide into 8 octants
		\draw[line width=0.1] (1,0,0) -- (1,0,2) -- (1,2,2);
		\draw[line width=0.1] (0,1,0) -- (0,1,2) -- (2,1,2);
		\draw[line width=0.1] (0,2,1) -- (0,0,1) -- (2,0,1);
	
		\node at (1.8,0.15,1.4) {\tiny \xmark};
		\node at (1.8,1.85,1.6) {\tiny \xmark};
		\node at (0.23,1.8,1.6) {\tiny \xmark};
		\node at (0.88,0.9,1.9) {\tiny \xmark};
	\end{scope}
\end{tikzpicture}}
	\caption{\label{fig:4-domino-fail}
		One possible configuration of the search cube after the first 3 probes
		of the 3D domino algorithm.
		Since there exists a POI in every octant in the top half of the cube,
		there does not exist any 4-domino configuration in this layer.
	}
\end{figure}

This poses a problem, since if we only discover a POI on our fourth probe and
continue the algorithm recursively from that octant, we have spent 4 probes to
reduce the search volume by a factor of 8, which, if done repeatedly would lead
to a probe complexity of $P(n) = 4 \lceil \log{n} \rceil$.
The 3D domino algorithm accounts for this by considering another intermediate
configuration, which is very similar to the 2-domino configuration of the 2D
domino algorithm.
In this configuration, we have two equally sized $d \times d \times d$ cubes,
where a POI is known to exist in one, and the other is known to be
empty.
From this configuration, while we are generally able to reduce the search volume
by a factor of 8 with every 3 probes, returning to another 2-domino
configuration.
However, if we get unlucky, these 3 probes in the 2-domino configuration may
fail to reduce the volume as desired---but the only way for this to occur would
induce a valid 4-domino configuration, from which point we can recurse
indefinitely.
Thus, such a failure can only occur once throughout the course of our algorithm.
See \Cref{fig:2-domino-3d} for a depiction of the first 3 probes in the 3D
2-domino procedure.
Note that this figure assumes that the first probe is successful without loss of
generality.
If either the second or third probe is successful, we can simply recurse into
another 2-domino configuration with an eighth of the volume.
Otherwise, we must just perform one more probe in either of the remaining two
octants to reach a 4-domino configuration.

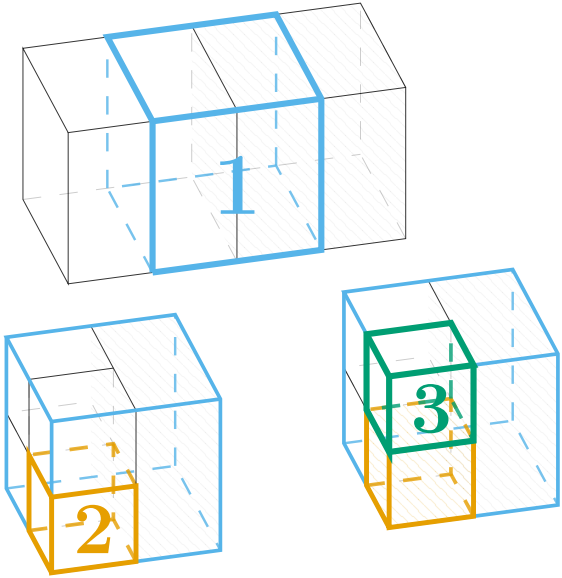
\begin{figure}[t!]
	\centering
	\resizebox*{\linewidth}{!}{\tdplotsetmaincoords{60}{-15}
\begin{tikzpicture}[tdplot_main_coords]
	% fill each face with a very light opacity color, okabe1, 0.1 opacity
	\fill[pattern={Lines[angle=-45,distance=3]}, pattern color=okabe1, opacity=0.2] (1,0,1) -- (2,0,1) -- (2,1,1) -- (1,1,1) -- cycle;
	\fill[pattern={Lines[angle=-45,distance=3]}, pattern color=okabe1, opacity=0.2] (1,0,0) -- (2,0,0) -- (2,0,1) -- (1,0,1) -- cycle;
	\fill[pattern={Lines[angle=-45,distance=3]}, pattern color=okabe1, opacity=0.2] (1,0,0) -- (1,1,0) -- (1,1,1) -- (1,0,1) -- cycle;

	% hidden lines
	\draw[dashed, line width=0.1, opacity=0.2] (2,0,0) -- (2,1,0) -- (2,1,1) -- (2,0,1);
	\draw[dashed, line width=0.1, opacity=0.2] (0,1,0) -- (2,1,0);

	\draw[dashed, line width=0.1, opacity=0.2] (1,0,0) -- (1,1,0) -- (1,1,1);

	% probe hidden
	\draw[dashed, okabe3, line width=0.4, opacity=0.8] (0.5,0,0) -- (0.5,1,0) -- (1.5,1,0) -- (1.5,0,0);
	\draw[dashed, okabe3, line width=0.4, opacity=0.8] (0.5,1,0) -- (0.5,1,1);
	\draw[dashed, okabe3, line width=0.4, opacity=0.8] (1.5,1,0) -- (1.5,1,1);

	% visible lines
	\draw[line width=0.1,opacity=0.8] (0,0,0) -- (2,0,0) -- (2,0,1) -- (0,0,1) -- (0,0,0) -- (0,1,0) -- (0,1,1) -- (0,0,1);
	\draw[line width=0.1,opacity=0.8] (0,1,1) -- (2,1,1) -- (2,0,1);
	
	\draw[line width=0.1,opacity=0.8] (1,0,0) -- (1,0,1) -- (1,1,1);

	% probe visible
	\draw[okabe3, line width=1] (0.5,0,0) -- (1.5,0,0) -- (1.5,0,1) -- (1.5,1,1) -- (0.5,1,1) -- (0.5,0,1) -- (0.5,0,0);
	\draw[okabe3, line width=1] (0.5,0,1) -- (1.5,0,1);

	% \draw[line width=0.1] (0,0,0) -- (2,0,0) -- (2,1,0) -- (0,1,0) -- (0,0,0) -- (0,0,1) -- (2,0,1) -- (2,1,1) -- (0,1,1) -- (0,0,1);
	% \draw[line width=0.1] (2,0,0) -- (2,0,1);
	% \draw[line width=0.1] (2,1,0) -- (2,1,1);
	% \draw[line width=0.1] (0,1,0) -- (0,1,1);

	% subdivide into 2 cubes
	% \draw[line width=0.1] (1,0,0) -- (1,1,0) -- (1,1,1) -- (1,0,1) -- (1,0,0);
	% \draw[line width=0.1] (0,1,0) -- (2,1,0) -- (2,1,1) -- (0,1,1) -- (0,1,0);

	% % draw a 1x1x1 box halfway between the two cubes, using okabe3
	% \draw[okabe3,line width=1] (0.5,0,0) -- (1.5,0,0) -- (1.5,1,0) -- (0.5,1,0) -- (0.5,0,0) -- (0.5,0,1) -- (1.5,0,1) -- (1.5,1,1) -- (0.5,1,1) -- (0.5,0,1);
	% \draw[okabe3,line width=1] (1.5,0,0) -- (1.5,0,1);
	% \draw[okabe3,line width=1] (1.5,1,0) -- (1.5,1,1);
	% \draw[okabe3,line width=1] (0.5,1,0) -- (0.5,1,1);
	% \draw[okabe3,line width=1] (0.5,0,0) -- (0.5,1,0);
	% \draw[okabe3,line width=1] (0.5,0,1) -- (0.5,1,1);

	\node[okabe3] at (1,0,0.5) {\Large \textbf 1};

	% \node[okabe3] at (0,0) {\large 1};

	% \node at (1.8,0.15,0.1) {\tiny \xmark};

	\begin{scope}[shift={(-1,-1.5,-1)}]
		\fill[pattern={Lines[angle=-45,distance=3]}, pattern color=okabe1, opacity=0.2] (1,0,1) -- (1.5,0,1) -- (1.5,1,1) -- (1,1,1) -- cycle;
		\fill[pattern={Lines[angle=-45,distance=3]}, pattern color=okabe1, opacity=0.2] (1,0,0) -- (1.5,0,0) -- (1.5,0,1) -- (1,0,1) -- cycle;
		\fill[pattern={Lines[angle=-45,distance=3]}, pattern color=okabe1, opacity=0.2] (1,0,0) -- (1,1,0) -- (1,1,1) -- (1,0,1) -- cycle;

		\draw[dashed, line width=0.1, opacity=0.2] (0.5,0,0) -- (0.5,1,0) -- (0.5,1,1);
		\draw[dashed, line width=0.1, opacity=0.2] (0.5,0.5,0.5) -- (1,0.5,0.5) -- (1,1,0.5) -- (0.5,1,0.5) -- (0.5,0,0.5);
		\draw[dashed, line width=0.1, opacity=0.2] (1,0,0.5) -- (1,0.5,0.5);
		\draw[dashed, line width=0.1, opacity=0.2] (0.5,0.5,1) -- (0.5,0.5,0) -- (1,0.5,0) -- (1,0.5,1);

		% probe 1 hidden
		\draw[dashed, okabe3, line width=0.4, opacity=0.8] (0.5,1,0) -- (1.5,1,0) -- (1.5,0,0);
		\draw[dashed, okabe3, line width=0.4, opacity=0.8] (1.5,1,0) -- (1.5,1,1);

		% probe 2 hidden
		\draw[dashed, okabe2, line width=0.6, opacity=0.8] (1,0,0.5) -- (1,0.5,0.5) -- (0.5,0.5,0.5);
		\draw[dashed, okabe2, line width=0.6, opacity=0.8] (0.5,0.5,0) -- (1,0.5,0) -- (1,0.5,0.5);
		\draw[dashed, okabe2, line width=0.6, opacity=0.8] (1,0,0) -- (1,0.5,0);

		\draw[line width=0.1,opacity=0.8] (1,0,0) -- (1,0,1) -- (1,1,1);
		\draw[line width=0.1,opacity=0.8] (0.5,1,0.5) -- (0.5,0,0.5) -- (1,0,0.5);
		\draw[line width=0.1,opacity=0.8] (0.5,0.5,0) -- (0.5,0.5,1) -- (1,0.5,1);

		% probe 1 visible
		\draw[okabe3, line width=0.6] (0.5,0,0) -- (1.5,0,0) -- (1.5,0,1) -- (1.5,1,1) -- (0.5,1,1) -- (0.5,0,1) -- (0.5,0,0);
		\draw[okabe3, line width=0.6] (0.5,0,1) -- (1.5,0,1);
		\draw[okabe3, line width=0.6] (0.5,0,0) -- (0.5,1,0) -- (0.5,1,1);

		% probe 2 visible
		\draw[okabe2, line width=0.8] (0.5,0,0) -- (0.5,0.5,0) -- (0.5,0.5,0.5) -- (0.5,0,0.5) -- (1,0,0.5) -- (1,0,0) -- (0.5,0,0) -- (0.5,0,0.5);
	
		\node[okabe2] at (0.75,0,0.25) {\large \textbf 2};
	\end{scope}

	\begin{scope}[shift={(1,-1.5,-1)}]
		\fill[pattern={Lines[angle=-45,distance=3]}, pattern color=okabe1, opacity=0.2] (1,0,1) -- (1.5,0,1) -- (1.5,1,1) -- (1,1,1) -- cycle;
		\fill[pattern={Lines[angle=-45,distance=3]}, pattern color=okabe1, opacity=0.2] (1,0,0) -- (1.5,0,0) -- (1.5,0,1) -- (1,0,1) -- cycle;
		\fill[pattern={Lines[angle=-45,distance=3]}, pattern color=okabe1, opacity=0.2] (1,0,0) -- (1,1,0) -- (1,1,1) -- (1,0,1) -- cycle;

		\fill[pattern={Lines[angle=-45,distance=3]}, pattern color=okabe2, opacity=0.4] (0.5,0,0) -- (0.5,0.5,0) -- (0.5,0.5,0.5) -- (1,0.5,0.5) -- (1,0,0.5) -- (1,0,0) -- cycle;

		\draw[dashed, line width=0.1, opacity=0.2] (0.5,0,0) -- (0.5,1,0) -- (0.5,1,1);
		\draw[dashed, line width=0.1, opacity=0.2] (0.5,0.5,0.5) -- (1,0.5,0.5) -- (1,1,0.5) -- (0.5,1,0.5) -- (0.5,0,0.5);
		\draw[dashed, line width=0.1, opacity=0.2] (1,0,0.5) -- (1,0.5,0.5);
		\draw[dashed, line width=0.1, opacity=0.2] (0.5,0.5,1) -- (0.5,0.5,0) -- (1,0.5,0) -- (1,0.5,1);

		% probe 1 hidden
		\draw[dashed, okabe3, line width=0.4, opacity=0.8] (0.5,1,0) -- (1.5,1,0) -- (1.5,0,0);
		\draw[dashed, okabe3, line width=0.4, opacity=0.8] (1.5,1,0) -- (1.5,1,1);

		% probe 2 hidden
		\draw[dashed, okabe2, line width=0.6, opacity=0.8] (1,0,0.5) -- (1,0.5,0.5) -- (0.5,0.5,0.5);
		\draw[dashed, okabe2, line width=0.6, opacity=0.8] (0.5,0.5,0) -- (1,0.5,0) -- (1,0.5,0.5);
		\draw[dashed, okabe2, line width=0.6, opacity=0.8] (1,0,0) -- (1,0.5,0);

		% probe 3 hidden
		\draw[dashed, okabe4, line width=0.6, opacity=0.8] (0.5,0,0.5) -- (1,0,0.5) -- (1,0.5,0.5) -- (0.5,0.5,0.5);
		\draw[dashed, okabe4, line width=0.6, opacity=0.8] (1,0,0.5) -- (1,0,1);
		\draw[dashed, okabe4, line width=0.6, opacity=0.8] (1,0.5,0.5) -- (1,0.5,1);

		\draw[line width=0.1,opacity=0.8] (1,0,0) -- (1,0,1) -- (1,1,1);
		\draw[line width=0.1,opacity=0.8] (0.5,1,0.5) -- (0.5,0,0.5) -- (1,0,0.5);
		\draw[line width=0.1,opacity=0.8] (0.5,0.5,0) -- (0.5,0.5,1) -- (1,0.5,1);

		% probe 1 visible
		\draw[okabe3, line width=0.6] (0.5,0,0) -- (1.5,0,0) -- (1.5,0,1) -- (1.5,1,1) -- (0.5,1,1) -- (0.5,0,1) -- (0.5,0,0);
		\draw[okabe3, line width=0.6] (0.5,0,1) -- (1.5,0,1);
		\draw[okabe3, line width=0.6] (0.5,0,0) -- (0.5,1,0) -- (0.5,1,1);

		% probe 2 visible
		\draw[okabe2, line width=0.8] (0.5,0,0) -- (0.5,0.5,0) -- (0.5,0.5,0.5) -- (0.5,0,0.5) -- (1,0,0.5) -- (1,0,0) -- (0.5,0,0) -- (0.5,0,0.5);
	
		% probe 3 visible
		\draw[okabe4, line width=1] (0.5,0,0.5) -- (0.5,0.5,0.5) -- (0.5,0.5,1) -- (1,0.5,1) -- (1,0,1) -- (0.5,0,1) -- cycle;
		\draw[okabe4, line width=1] (0.5,0,1) -- (0.5,0.5,1);
		\draw[okabe4, line width=1] (0.5,0,0.5) -- (1,0,0.5) -- (1,0,1);

		\node[okabe4] at (0.75,0,0.75) {\large \textbf 3};
	\end{scope}
\end{tikzpicture}}
	\caption{\label{fig:2-domino-3d}
		A depiction of the first 3 probes in the 3D 2-domino procedure.
		If all 3 probes fail, we are guaranteed to reach a 4-domino
		configuration.
	}
\end{figure}

Reaching this 2-domino configuration is easy, since, just like in the 2D domino
algorithm, we can simply probe adjacent octants.
% The only situation where we do not reach a 2-domino configuration is when the
% first probe is successful, which is the best case scenario.
Thus, our 3D domino algorithm begins by probing adjacent octants until one
succeeds.
If the first probe is successful, we simply recurse into that octant.
If a 4-domino configuration is reached, we can recurse indefinitely.
Finally, if a 4-domino configuration is not possible, since at least one probe
failed and we probe adjacent octants, we are guaranteed to reach a 2-domino
configuration, which we recurse into either indefinitely or until it becomes a
4-domino configuration.
Overall, the worst case scenario is where we probe 7 octants before determining
the location of the POI, after which we recurse optimally in a
4-domino configuration indefinitely.
Such a scenario would lead to a probe complexity of $P(n) = 7 + 3 \lceil \log{n
/ 2} \rceil = 3 \lceil \log{n} \rceil + 4$.

Reflecting on this algorithm, we see how, unlike the 2D domino algorithm which
required a single domino configuration which was easily reachable, the 3D domino
algorithm required a second intermediate configuration with a separate algorithm
on how to move from one configuration to the other.
This more tailored approach was not necessary for the generalized CBS algorithm,
which is broken up into simple, binary search-like phases, independent of the
search space dimension.

Nevertheless, it might be interesting to consider whether more domino-like
configurations could be used to extend our domino algorithms to higher
dimensions.
It is straightforward to see that for a $k$-dimensional search space, there
exists a $2^{k-1}$-domino configuration, where all but one of the $2^{k-1}$
cubes are known to be empty, after which point we can recurse indefinitely,
halving the search volume with every probe, such as the 2-domino configuration
in the 2D case and the 4-domino configuration in the 3D case.
Further, you can consider a $2^{k-2}$-domino configuration, a $2^{k-3}$-domino,
and so on, until reaching a 2-domino configuration.
% , which is trivial to reach
% starting from the initial $k$-dimensional cube.
% Not only can we reach a 2-domino configuration, we believe the same reasoning by
% which we either recurse indefinitely from a 2-domino configuration or reach a
% 4-domino configuration in the 3D case can be applied to any $k$-dimensional
% search space.
% However, we are not aware of any general algorithm to transition between a
% $2^p$-domino configuration and a $2^{p+1}$-domino configuration in general
% without requiring more than at most $p$ probes to reach another $2^p$-domino
% where all dimensions are halved.
% And taking more than $p$ probes to achieve this would hurt the overall probe
% complexity.
% 
% 
% 
While it is trivial to reach a 2-domino configuration from the initial hypercube
in any dimension, it is not clear whether a transition algorithm exists from a
2-domino to a 4-domino configuration for dimensions higher than 3.
% while we can get to a 2-domino configuration in any dimension, and we
% can recurse indefinitely from a $2^{k-1}$-domino configuration, it is not clear
% how to transition between these configurations in general.
In fact, we conjecture that this transition breaks down in dimensions higher than
3, ending such a generalization, but we leave this as an open question.

\subsection{The Generalized CBS Algorithm} \label{sec:generalized-cbs}
In the Generalized CBS algorithm, after the binary search of each phase which
determines an approximate remaining distance, $\tilde{\delta_p}$ to the nearest
POI, we must determine which of the $k$ coordinates to set to $\pm
\tilde{\delta_p}$.
In \Cref{alg:face-recurse}, we describe a procedure to do this by iteratively
probing smaller and smaller faces of our $p$-cube using a procedure we refer to
as \texttt{ProbeFace}, which relies on the following lemma.

\begin{lemma}(same as Lemma~\ref{lem:probe-face})
	Consider a $(p-a)$-dimensional face $f$ of a $p$-cube of radius
	$\tilde{\delta_p}$ centered at the origin.
	% , for an integer $1 \leq a \leq p$.
	$f$ can be defined by a unique vector $s \in \{-1, 0, 1\}^p$ of length $p$,
	where `$a$' values are fixed to $\pm 1$, and the remaining values are 0,
	such that
	any point $x$ of $f$ satisfies $s \cdot x = a \tilde{\delta_p}$.
	\texttt{ProbeFace($f$)} can be performed by moving $\Delta$ to coordinate
	$s$ and conducting a probe with radius $\tilde{\delta_p} - 1$.
\end{lemma}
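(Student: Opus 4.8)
The plan is to prove the two assertions separately: first the combinatorial characterization of $f$ by a sign vector, and then the correctness of the stated probe. Throughout, write $r := \tilde{\delta_p}$ and let the $p$-cube be $C = [-r,r]^p$. A $(p-a)$-dimensional face of $C$ is obtained by choosing $a$ of the $p$ coordinates and fixing each to one of the two boundary values $\pm r$, leaving the remaining $p-a$ coordinates free in $[-r,r]$. So I would define $s$ by $s_i = +1$ (resp.\ $-1$) if coordinate $i$ is fixed to $+r$ (resp.\ $-r$), and $s_i = 0$ if coordinate $i$ is free; this $s$ has exactly $a$ nonzero entries, each $\pm 1$. For any $x \in f$ we have $s_i x_i = r$ when $i$ is fixed (since $x_i = s_i r$ and $s_i^2 = 1$) and $s_i x_i = 0$ when $i$ is free, so $s \cdot x = \sum_i s_i x_i = a r$, which is the claimed identity.

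For uniqueness I would argue that $s$ exposes exactly $f$. For any $x \in C$, $s \cdot x = \sum_{i : s_i \neq 0} s_i x_i \leq \sum_{i : s_i \neq 0} |x_i| \leq a r$, with equality iff $x_i = s_i r$ for every $i$ in the support of $s$; hence $\{x \in C : s \cdot x = a r\} = f$. Since the support-and-sign pattern of $s$ is recovered from $f$ (it records precisely which coordinates are fixed and to which boundary), the map $s \mapsto f$ is a bijection between sign vectors with $a$ nonzeros and $(p-a)$-dimensional faces, giving the uniqueness claim.

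For the probe, I would analyze the $L_\infty$ ball $B(s, r-1)$ coordinate by coordinate. Moving $\Delta$ to $s$ costs at most $\|s\|_\infty = 1$ under $L_\infty$, matching the stated movement bound. The detected region $B(s,r-1) \cap C$ is: for a fixed coordinate $i$ with $s_i = +1$, $z_i \in [2-r, r]$ (and symmetrically $[-r, r-2]$ for $s_i = -1$), so the probe reaches all the way out to the facet $z_i = s_i r$; for a free coordinate $j$, $z_j \in [-(r-1), r-1]$, so the probe stops one unit short of the facets $z_j = \pm r$. This is exactly the intended behavior: the probe covers $f$ out to its fixed facets while excluding the distance-$1$ neighborhoods of the other facets, i.e., the lower-dimensional boundary of $f$.

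The main work, and the step I expect to be the real obstacle, is to turn this geometric description into a correctness statement for \texttt{ProbeFace} as used in \Cref{alg:face-recurse}. Here I would invoke two facts supplied by the algorithm's context: every POI of the current cube lies in the shell $\max_k |z_k| \in (r-1, r]$ (the binary search certified that the radius-$(r-1)$ cube is empty and the radius-$r$ cube is not, while the probe's confinement to $C$ discards any POI with $\max_k |z_k| > r$), and faces are probed in order of increasing $a$. For a POI $z$, let $M = \{k : |z_k| > r-1\}$ be its set of maxed-out coordinates, nonempty by the shell condition. I would show that $z \in B(s, r-1)$ forces $M \subseteq S$ (free coordinates cannot be maxed out) and $s_k = \operatorname{sign}(z_k)$ for each $k \in M$ (the fixed-direction interval excludes the opposite boundary). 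Consequently no face with $a < |M|$ can succeed, and at level $a = |M|$ the only face that succeeds is the unique one whose support-and-sign pattern equals $M$; that face does succeed, since its fixed coordinates are maxed out with matching sign and its free coordinates satisfy $|z_j| \leq r-1$. Hence the procedure returns the highest-dimensional face whose relative interior is within distance $1$ of the POI, and fixing its $|M|$ coordinates to $\pm r$ approximates each to within distance $1$, as the phase reduction requires. The delicate point throughout is reconciling the asymmetry of the ball---reaching the facet in fixed directions but receding by one unit in free directions---with shell membership, which is precisely what makes the weak fixed-direction bound $z_i \geq 2-r$ harmless.
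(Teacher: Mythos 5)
Your proposal is correct, and its core computation is the same as the paper's: a coordinate-wise analysis of the $L_\infty$ ball of radius $\tilde{\delta_p}-1$ centered at $s$, showing it stays inside the $p$-cube, reaches the fixed facets, and stops one unit short of the free facets. Where you genuinely diverge is in how the ``no false positives'' half of the lemma is handled. The paper organizes its proof around three properties and, for the third, argues set-theoretically that the probe excludes every other face $f'$ of equal or lower dimension (by exhibiting a coordinate $j$ with $s'_j=\pm1$, $s_j=0$, forcing $|x'_j - s_j| = \tilde{\delta_p} > \tilde{\delta_p}-1$), leaning on the invariant that higher-dimensional faces and the inner region are already known empty. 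You instead prove a forcing statement about POIs themselves: any POI $z$ detected by the probe of $s$ must have its maxed-out coordinate set $M=\{k : |z_k|>\tilde{\delta_p}-1\}$ contained in the support of $s$ with matching signs, and then the increasing-$a$ probing order of \Cref{alg:face-recurse} yields $M=\operatorname{supp}(s)$ exactly for the first successful probe. This is strictly stronger than the paper's property, and it closes a step the paper leaves implicit: POIs live in the width-1 shell \emph{near} faces, not on them, so excluding the faces pointwise does not by itself rule out a probe at level $a$ being triggered by a POI whose maxed-out set is a proper subset of $\operatorname{supp}(s)$ (e.g., a corner probe triggered by a POI near a facet); your charging of that case to the earlier, failed facet probe is exactly what makes the phase reduction sound. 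You also prove the uniqueness of $s$ via the equality case of $s\cdot x \le a\tilde{\delta_p}$ on the cube, which the paper asserts but never proves. Both arguments share the same mild looseness about small radii (the paper assumes $\tilde{\delta_p}\ge 1$; your identity $|z_k - s_k| = |z_k|-1$ wants $|z_k|\ge 1$), which is immaterial since the search terminates at unit scale. In short: same geometric engine, but your treatment of detection soundness is more rigorous and more directly tied to the correctness of the surrounding algorithm, at the cost of being longer than the paper's exclusion argument.
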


\begin{proof}
	Let us briefly consider which qualities we require from our probe.
	\begin{enumerate}
		\item The probe must not include any regions outside of our $p$-cube.
		\item The probe should include the entire face $f$, besides at most 1
		unit of padding from its boundary which will be covered by subsequent
		probes of lower-dimensional faces.
		\item The probe for a $(p-a)$-dimensional face $f$ should not include
		any other regions that are not already known to be empty.
	\end{enumerate}

	We prove the first property by contradiction. 
	Assume, for the sake of contradiction, that the probe includes a point $x$
	that is outside the $p$-cube of radius $\tilde{\delta_p}$ centered at the
	origin.
	A point $x$ is outside this $p$-cube if, for at least one coordinate $i$,
	its absolute value $|x_i|$ is greater than $\tilde{\delta_p}$.
	Without loss of generality, let us assume $x_i > \tilde{\delta_p}$ for this
	specific coordinate $i$.
	The probe is centered at $s$ (where $s_j \in \{-1, 0, 1\}$, so $s_j \leq 1$
	for all coordinates $j$) and has a radius of $\tilde{\delta_p} - 1$.
	For a point $x$ to be included in this probe (which is itself a $p$-cube),
	it must satisfy $|x_j - s_j| \leq \tilde{\delta_p} - 1$ for all coordinates
	$j$.

	Focusing on our specific coordinate $i$:
	We have $x_i > \tilde{\delta_p}$.
	We also know that $s_i \leq 1$.
	Consider the difference $x_i - s_i$.
	Since $x_i > \tilde{\delta_p}$ it follows that:
	$x_i - s_i > \tilde{\delta_p} - s_i$.
	Given $s_i \leq 1$, we have $\tilde{\delta_p} - s_i \geq \tilde{\delta_p} - 1$.
	Therefore, $x_i - s_i > \tilde{\delta_p} - 1$ and consequently $|x_i - s_i|
	> \tilde{\delta_p} - 1$.
	This result directly contradicts the condition for $x$ to be inside the
	probe.

	% Regarding the second property, consider a point $x$ on the face $f$.
	% We know that $s \cdot x = a \tilde{\delta_p}$.
	We now turn to the second property.
	The probe $P$ is centered at the coordinate $s$ and has a radius of
	$\tilde{\delta_p} - 1$ in each dimension.
	Thus, a point $y$ is within this probe if it satisfies $|y_j - s_j| \leq
	\tilde{\delta_p} - 1$ for all coordinates $j=1, \dots, p$.
	% We want to demonstrate that this probe $P$ includes the entire face $f$, with the exception of at most a 1-unit padding from its boundary. This padding is intended to be covered by subsequent probes of lower-dimensional faces.

	Let $I_{\text{free}} = \{j \mid s_j = 0\}$ be the set of indices for
	coordinates where $s_j$ is zero.
	These are the ``free'' coordinates along which the face $f$ extends.
	For any $x \in f$, $x_j \in [-\tilde{\delta_p}, \tilde{\delta_p}]$ for $j
	\in I_{\text{free}}$.
	Similarly, let $I_{\text{fixed}} = \{j \mid s_j \in \{-1, 1\}\}$ be the set of indices
	for coordinates where $s_j$ is non-zero.
	These are the ``fixed'' coordinates. For any point $x \in f$, its $j$-th
	coordinate is determined by $s_j$: $x_j = s_j \tilde{\delta_p}$ if $j \in
	I_{\text{fixed}}$.

	Let us consider an arbitrary point $x \in f$.
	We wish to show that $x$ satisfies the condition $|x_j - s_j| \leq
	\tilde{\delta_p} - 1$ for each coordinate $j$, considering two cases:
	\begin{itemize}
		\item If $j \in I_{\text{fixed}}$, then
			$x_j = s_j \tilde{\delta_p}$ for any point $x \in f$.
			We examine the condition for the probe:
			$|x_j - s_j| = |s_j \tilde{\delta_p} - s_j| = |s_j (\tilde{\delta_p} -
			1)|$.
			Since $s_j \in \{-1, 1\}$, we have $|s_j| = 1$.
			Therefore, $|s_j (\tilde{\delta_p} - 1)| = |\tilde{\delta_p} - 1|$.
			Assuming $\tilde{\delta_p} \geq 1$ (so that the probe radius
			$\tilde{\delta_p} - 1$ is non-negative), we have $|\tilde{\delta_p} - 1|
			= \tilde{\delta_p} - 1$.
			Thus, for all $j \in I_{\text{fixed}}$, the condition $|x_j - s_j| \leq
			\tilde{\delta_p} - 1$ is satisfied.
		
		\item If $j \in I_{\text{free}}$, then $s_j = 0$.
			For a point $x \in f$, its $j$-th coordinate $x_j$ can range within
			$[-\tilde{\delta_p}, \tilde{\delta_p}]$.
			The condition for $x$ to be included in the probe $P$ with respect
			to this $j$-th coordinate is $|x_j - s_j| \leq \tilde{\delta_p} -
			1$.
			Since $s_j = 0$, this simplifies to $|x_j| \leq \tilde{\delta_p} -
			1$.
			This means that for coordinates $j \in I_{\text{free}}$, the probe
			$P$ includes points $x \in f$ if their $j$-th coordinate $x_j$ lies
			in the interval $[-(\tilde{\delta_p} - 1), \tilde{\delta_p} - 1]$.
			While this is notably \textit{not} the entire range of $x_j$ (which
			is $[-\tilde{\delta_p}, \tilde{\delta_p}]$), it includes everything
			besides a 1-unit ``padding'' from the boundary of the face $f$ along
			this coordinate's axis, satisfying the second property.
	\end{itemize}

	Finally, we turn to the third property.
	We know from our algorithm, \Cref{alg:face-recurse}, that all
	higher-dimensional faces have been probed and are known to be empty.
	Further, we know that the internal region (besides a 1-unit padding) of the
	$p$-cube is empty.
	Thus, the probe must only avoid probing lower or equal-dimensional faces.
	Consider one such $(p - a')$-dimensional face $f'$ where $a' \geq a$, which
	is defined by a vector $s' \in \{-1, 0, 1\}^p$.
	Since $f'$ has at least $a'$ coordinates fixed to $\pm 1$, it must be the
	case that $s'$ has at least one coordinate $j$ such that $s'_j = \pm 1$ and
	$s_j = 0$.
	This means that for all points in $f'$, the $j$-th coordinate is fixed to
	$s'_j \tilde{\delta_p}$, i.e., that for any point $x' \in f'$, we have
	$|x_j| = \tilde{\delta_p}$.
	However, since our probe only has radius $\tilde{\delta_p} - 1$, and since
	$s_j = 0$, we have $|x'_j - s_j| = |x'_j| = \tilde{\delta_p} >
	\tilde{\delta_p} - 1$.
	Thus, our probe does not include any points in $f'$.
\end{proof}

We can now bound the total number, $g(k)$, of \texttt{ProbeFace} calls made
throughout the course of the algorithm.
A $p$-cube contains $3^p - 1$ faces. Thus, the total number of faces we must probe is technically $3^p - 2$ since we can skip the final probe.

However, since we start from higher-dimensional faces, if we take more than $2p$
probes, we are guaranteed to find a POI in a face which is lower by at least 2
dimensions, reducing $p$ by 2 instead of just 1.
It is straightforward to see that, in the worst case, a POI will only be located
in the final probe in the initial, $k$-dimensional cube, requiring $3^k - 2$
probes.
While this result is exceedingly unlikely, and is not possible if we require
that all of the POIs' coordinates differ in magnitude by at least 1, it is
nevertheless possible under our assumptions, and thus we only bound
$g(k)$ by $3^k - 2$.

As alluded to above, however, if we do require that all of the POIs' coordinates
differ in magnitude by at least 1, we know that the POIs will always be located
in a facet of our $p$-cube in each phase, requiring at most $2p$ probes.
Over the course of the algorithm, our total number of probes will therefore be
bounded by
\begin{equation*}
	\sum_{p=1}^{k} 2p = k(k+1),
\end{equation*}
thus bounded $g(k) = \mathcal{O}(k^2)$.

Alternatively, if we only consider cases where $k$ is small, i.e., $k =
o(\log{\log{n}})$, then we know that the total number \texttt{ProbeFace} calls
made for any dimension $p$, $3^p$, is bounded by $3^{\log{\log{n}}} =
o(\log{n})$.
However, if we take more than $2p$ probes, decreasing $p$ by 2, we can charge
our extra probes to the binary search of the $p - 1$ face we skipped, which we
expected to take $\lceil \log{n} \rceil$ probes.
Thus our worst case is the same as our previous example, and we similarly obtain
that $g(k) = \mathcal{O}(k^2)$.

% === Probes Table ===
\begin{table*}[tb!]
\centering
\begin{tabular}{|c|rrrr|rrrr|rrrr|}
\hline
& \multicolumn{4}{c|}{Domino Algorithms} & \multicolumn{4}{c|}{Orthant Algorithm} & \multicolumn{4}{c|}{Generalized CBS Algorithm} \\
$k$ & $\sigma$ & Avg & Max & Bound & $\sigma$ & Avg & Max & Bound & $\sigma$ & Avg & Max & Bound \\
\hline
1D & --- & --- & --- & --- & \textbf{0.00} & \textbf{0.95} & \textbf{0.95} & \textbf{1.00} & 0.00 & \textbf{0.95} & 1.00 & \textbf{1.00} \\
2D & 0.04 & \textbf{1.92} & \textbf{1.95} & \textbf{2.05} & 0.18 & 2.14 & 2.85 & 3.00 & \textbf{0.03} & 1.93 & 2.00 & 2.15/2.25 \\
3D & 0.11 & \textbf{2.92} & \textbf{3.05} & \textbf{3.20} & 0.46 & 4.16 & 6.35 & 7.00 & \textbf{0.07} & 2.96 & 3.10 & 3.40/4.10 \\
4D & --- & --- & --- & --- & 0.98 & 8.02 & 13.2 & 15.0 & \textbf{0.10} & \textbf{4.00} & \textbf{4.25} & \textbf{4.75/7.75} \\
5D & --- & --- & --- & --- & 2.00 & 15.6 & 26.1 & 31.0 & \textbf{0.14} & \textbf{5.06} & \textbf{5.40} & \textbf{6.20/16.8} \\
6D & --- & --- & --- & --- & 4.02 & 30.9 & 50.6 & 63.0 & \textbf{0.17} & \textbf{6.15} & \textbf{6.65} & \textbf{7.75/42.0} \\
7D & --- & --- & --- & --- & 8.05 & 61.3 & 103 & 127 & \textbf{0.21} & \textbf{7.26} & \textbf{7.95} & \textbf{9.40/116} \\
8D & --- & --- & --- & --- & 16.1 & 122 & 209 & \textbf{255} & \textbf{0.25} & \textbf{8.40} & \textbf{9.30} & 11.2/336 \\
\hline
\end{tabular}
\caption{Normalized number of probes ($P/\log n$) for different search algorithms across dimensions. The best (lowest) values are highlighted in bold.}
\label{tab:algorithm_metrics_P}
\end{table*}

\section{Experimental Results} \label{sec:experiments}

In this section we compare how each algorithm performs experimentally in terms
of our three metrics: the number of probes made, the distance traveled, and the
number of POI responses.
We compared results between the domino algorithms (in 2D and 3D), the orthant
algorithm (in 1D -- 8D), and the generalized CBS algorithm (also in 1D -- 8D).
Each algorithm was executed 60 million times, where $n = 2^{20}$.
The POIs were placed at uniformly random locations in the search area, namely
such that each of their coordinates was uniformly random in the range $[0, n]$.
These experiments focused on the $L_\infty$ metric.

% We implemented our 2D and 3D domino algorithms, as well as the generalized
% orthant and Central Binary Search (CBS) algorithms, and tested them on search
% areas with radius $n = 2^{20}$ in 1D up to 4D space.
It is clear from our results that both the domino and CBS algorithms
significantly outperform the generalized orthant algorithms for any dimension $k
> 1$, where in 1D space all algorithms are equivalent.
It is also possible to see that, as the dimension $k$ increases, the normalized
number of probes for the CBS algorithm gradually increases, as predicted by the
$g(k)$ dependence in its probe complexity.
See \Cref{fig:P}.

\begin{figure}[htb!]
	\centering
	\includegraphics[width=0.8\linewidth]{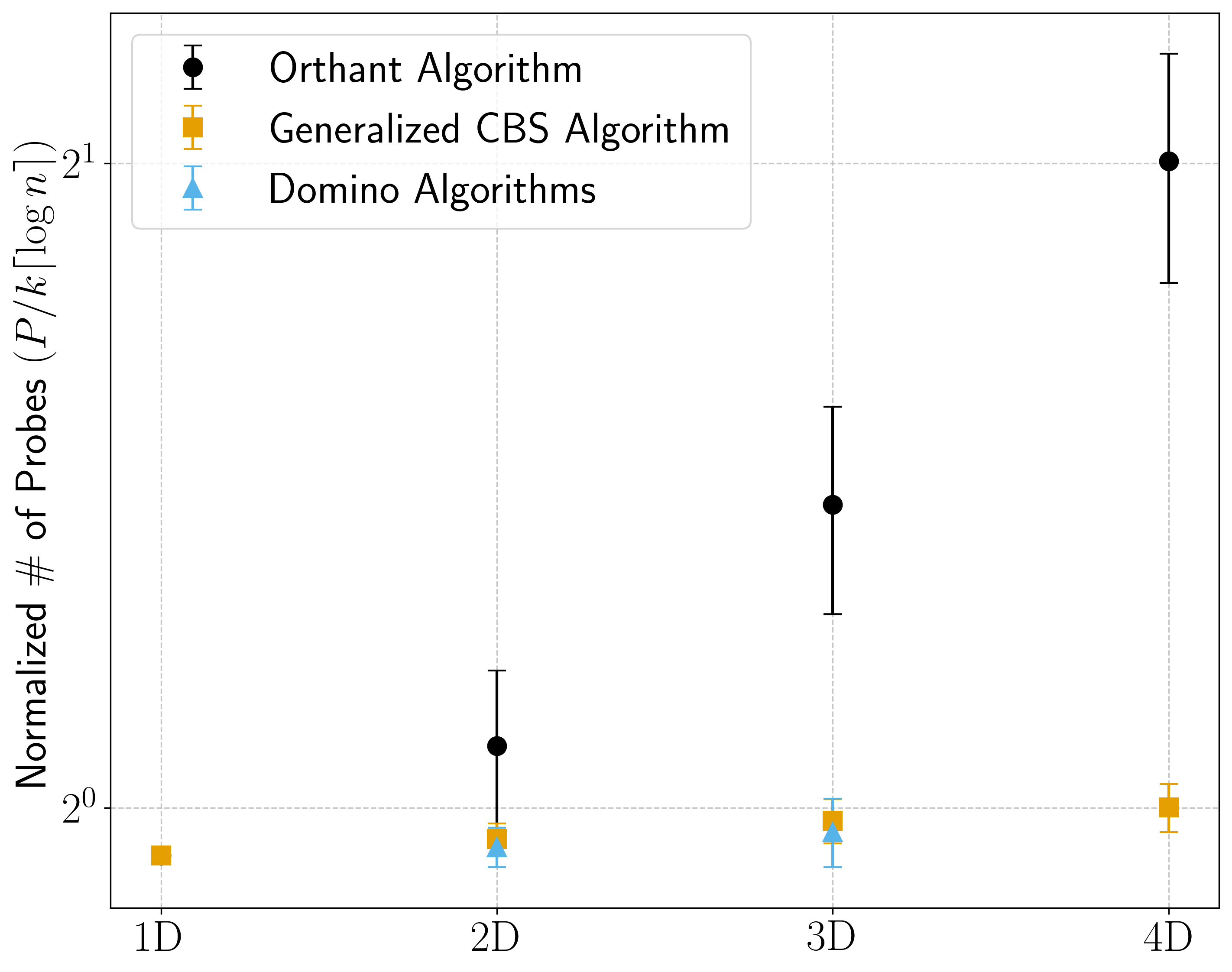}
	\caption{
		Simulation results for $P / k \lceil \log{n} \rceil$%
		% per Algorithm
		% ,
		% where $n = 2^{20}$.
		.
		Error bars represent one standard deviation from the mean.
	}
	\label{fig:P}
\end{figure}

Regarding the distance traveled by the search point,
$\Delta$, we see that not only does the CBS
algorithm, being instance-optimal with regards to distance traveled, travel
significantly less distance with respect to the nearest POI's distance
on average, but it also has a much lower variability in the distance traveled
than the other algorithms.
See \Cref{fig:D}.
% We conduct a deeper analysis, including measuring POI responses, comparing
% algorithm variants, exploring even higher dimensions, and more, in an appendix.

\begin{figure}[htb!]
	\centering
	\includegraphics[width=0.8\linewidth]{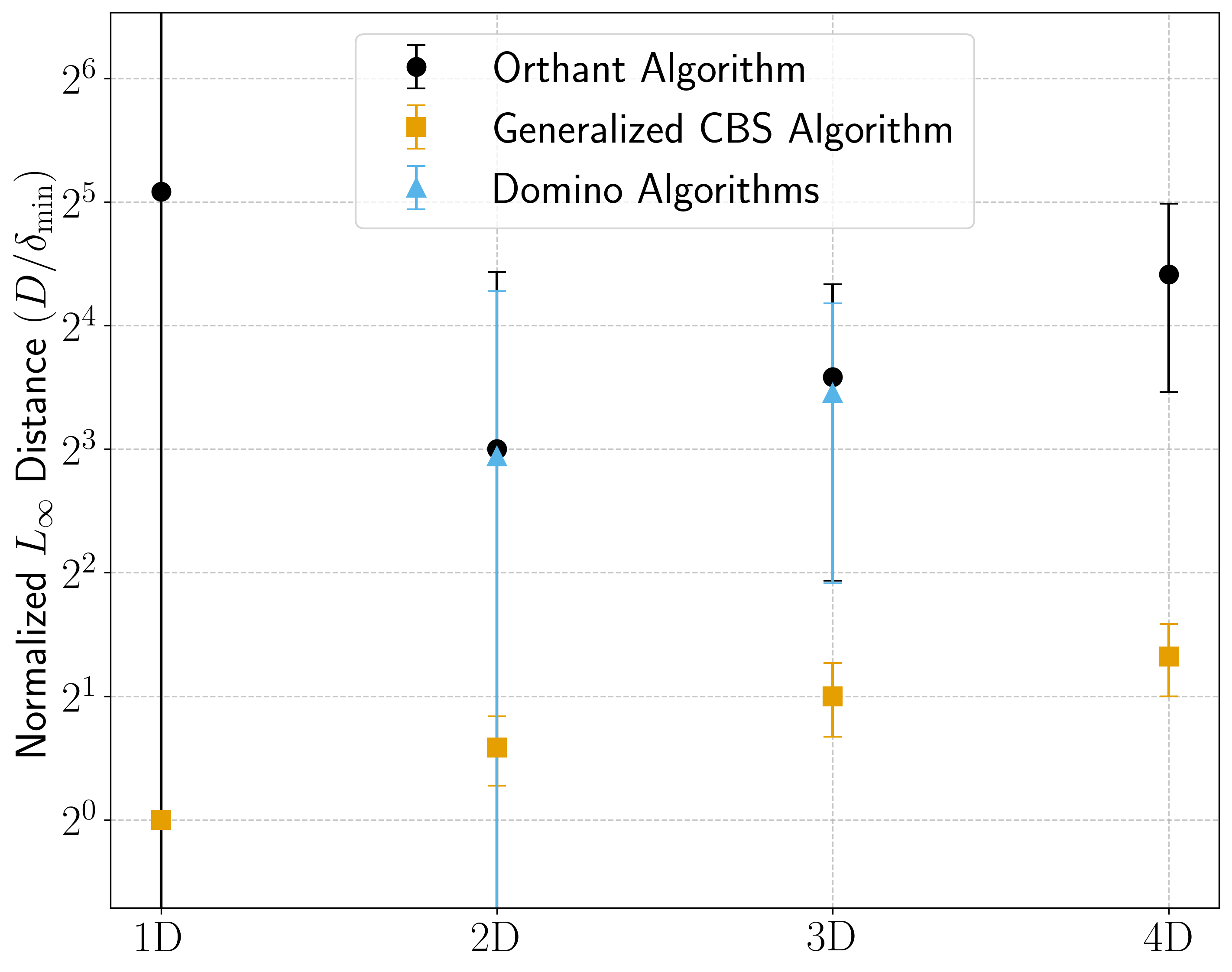}
	\caption{
		Simulation results for $D / \delta_\text{min}$%
		% per Algorithm
		,
		where $n = 2^{20}$.
		% .
		% Error bars represent one standard deviation from the mean.
	}
	\label{fig:D}
\end{figure}

% In this section we compare how each algorithm performs experimentally in terms
% of our three metrics: the number of probes made, the distance traveled, and the
% number of POI responses.
% We compared results between the domino algorithms (in 2D and 3D), the orthant
% algorithm (in 1D -- 8D), and the generalized CBS algorithm (also in 1D -- 8D).
% Each algorithm was executed 10 million times, where $n = 2^{20}$.
% The POIs were placed at uniformly random locations in the search area, namely
% such that each of their coordinates was uniformly random in the range $[0, n]$.
% These experiments focused on the $L_\infty$ metric, but the same results hold
% for the $L_1$ metric as well.

% All tables are automatically generated from the experimental data

% === Distance Table ===
\begin{table*}[tb!]
\centering
\begin{tabular}{|c|rrrr|rrrr|rrrr|}
\hline
& \multicolumn{4}{c|}{Domino Algorithms} & \multicolumn{4}{c|}{Orthant Algorithm} & \multicolumn{4}{c|}{Generalized CBS Algorithm} \\
$k$ & $\sigma$ & Avg & Max & Bound & $\sigma$ & Avg & Max & Bound & $\sigma$ & Avg & Max & Bound \\
\hline
1D & --- & --- & --- & --- & $\sim 10^{4}$ & 27.8 & $\sim 10^{7}$ & $\sim 10^{6}$ & \textbf{0.00} & \textbf{1.00} & \textbf{1.00} & \textbf{1.00} \\
2D & 17.2 & 7.68 & $\sim 10^{4}$ & $\sim 10^{6}$ & 17.1 & 8.00 & $\sim 10^{4}$ & $\sim 10^{6}$ & \textbf{0.29} & \textbf{1.50} & \textbf{2.00} & \textbf{2.00} \\
3D & 7.17 & 11.0 & $\sim 10^{3}$ & $\sim 10^{7}$ & 8.14 & 12.0 & $\sim 10^{3}$ & $\sim 10^{7}$ & \textbf{0.41} & \textbf{2.00} & \textbf{3.00} & \textbf{3.00} \\
4D & --- & --- & --- & --- & 10.4 & 21.3 & $\sim 10^{3}$ & $\sim 10^{7}$ & \textbf{0.50} & \textbf{2.50} & \textbf{3.99} & \textbf{4.00} \\
5D & --- & --- & --- & --- & 16.7 & 40.0 & $\sim 10^{3}$ & $\sim 10^{7}$ & \textbf{0.58} & \textbf{3.00} & \textbf{4.97} & \textbf{5.00} \\
6D & --- & --- & --- & --- & 29.7 & 76.8 & $\sim 10^{3}$ & $\sim 10^{8}$ & \textbf{0.65} & \textbf{3.50} & \textbf{5.94} & \textbf{6.00} \\
7D & --- & --- & --- & --- & 55.2 & 149 & $\sim 10^{3}$ & $\sim 10^{8}$ & \textbf{0.71} & \textbf{4.00} & \textbf{6.86} & \textbf{7.00} \\
8D & --- & --- & --- & --- & 105 & 293 & $\sim 10^{3}$ & $\sim 10^{8}$ & \textbf{0.76} & \textbf{4.50} & \textbf{7.76} & \textbf{8.00} \\
\hline
\end{tabular}
\caption{Normalized $L_\infty$ distance ($D/\delta_\text{min}$) for different search algorithms across dimensions.}
\label{tab:algorithm_metrics_D}
\end{table*}

\subsection{Number of Probes Made (\texorpdfstring{$P$}{P})}
In \Cref{tab:algorithm_metrics_P} we compare the number of probes made by each
algorithm operating in different dimensions.
Each result is normalized by $\log n$, such that the theoretical lower bound for
any algorithm after the normalization is just $k$ for the $k$-dimensional case.
As expected, the domino algorithms, specifically designed to minimize the number
of probes, outperform the other algorithms in 2D and 3D across the average,
maximum, and theoretical bounds.
Interestingly, the generalized central binary search (CBS) algorithm, despite
performing slightly worse than the domino algorithms, performed slightly more
consistently, with a smaller standard deviation than the domino algorithms.
The orthant algorithms, as expected, perform very poorly in this metric, with
their average appearing to be, as expected, roughly $2^k / 2$.
It should be noted that the generalized CBS algorithm may theoretically only
find a POI when probing the final face it queries in the $k$-dimensional
space, recalling that there may be up to $g(k) = 3^k - 1$ faces.
In theory, this may, in 8 dimensions and beyond, lead to a worse performance
than the orthant algorithm.
However, recall that only the first $2k$ faces are facets, with all the others
being progressively lower dimensional.
These first $2k$ facets in general will be much much larger than all the
subsequent faces, meaning that, unless a POI is placed adversarially, it is
most probable that the POI will be found in one of the first $2k$ faces.
In fact, we can estimate this probability numerically.

Consider the case where we have a $k$-dimensional width-1 shell with radius
$\delta_\text{min}$, with only one POI placed uniformly at random in the
shell.
We probe each facet with a radius $\delta_\text{min} - 1$, covering a volume
contained by that facet of $2^{k-1} (\delta_\text{min} - 1)^{k - 1}$.
Overall, the $2k$ facet probes cover a volume of $k 2^k (\delta_\text{min}
- 1)^{k - 1}$.
The total volume of the $k$-dimensional shell is the total volume of the
hypercube, $2^k \delta_\text{min}^k$, minus the volume of the inner shell, which is
$2^k (\delta_\text{min} - 1)^k$.
Together, we get that the ratio of these volumes is
\begin{equation*}
	\frac{k (\delta_\text{min} - 1)^{k - 1}}{\delta_\text{min}^k - (\delta_\text{min} - 1)^k}.
\end{equation*}

Assuming that $\delta_\text{min}$ is at least $k$, this ratio is minimized as $k
\to \infty$, where it approaches $\frac{1}{e - 1} \approx 0.58$, where $e$ is
Euler's number.
In other words, assuming that a POI is placed uniformly at random, and that
the search area radius is only moderately larger than its dimension, it is most
probable that the POI will be found in one of the first $2k$ facets.
Surely this is the case for our experiments where the search area radius $n =
2^{20}$ is much larger than the dimension $k \leq 8$.
As such, we included not only the theoretical bounds for the generalized CBS
algorithm, but also the bounds assuming we always find the POI in one of the
facets we probe during each phase.
Our experimental results confirm that, despite the millions of simulations
performed, this bound was never violated.
This further supports our claim that under reasonable conditions, $g(k)$ can be
more accurately bounded by $k(k + 1)$.

\subsection{Distance Traveled (\texorpdfstring{$D$}{D})}
In \Cref{tab:algorithm_metrics_D} we compare the distance traveled by $\Delta$
for each algorithm operating in different dimensions.
Each result is normalized by $\delta_\text{min}$, such that the theoretical
lower bound for any algorithm after the normalization is just $1$.
We expect any \textit{instance-optimal} algorithm, with respect to the distance
traveled, to consistently travel within a constant multiple of this distance.
Our generalized CBS algorithm expects to travel a distance of at most $k$ times
the minimum distance, for example.
On the other hand, non-instance-optimal algorithms, such as the orthant and
domino algorithms, expect to travel a distance more dependent on the area of the
search area, $n$, and the dimension, $k$, essentially ignoring the position of
the nearest POI.
In the worst case, the POI is placed directly at the origin, but since we
disallowed this in our experiments, they would be placed at distance $1$ from
the origin, thus maximizing the ratio of the distance traveled to the minimum
distance.
While this extreme case evidently does not occur in our experiments, it is clear
from our results how much better the generalized CBS algorithm performs.
It outperforms the orthant and domino algorithms on all metrics and across all
dimensions, often by several orders of magnitude.
Reassuringly, the generalized CBS algorithm never travels a distance greater
than $k$ times the minimum distance, and seems to on average travel a distance
of $(k + 1) / 2$ times greater, with a moderate standard deviation of roughly
$20\%$ of the average.
It is worth noting that the search point, $\Delta$, 
in domino algorithms appears to travel
marginally less distance than a search point in the orthant algorithm.

% === Responses Table ===
\begin{table*}[tb!]
\centering
\begin{tabular}{|c|rrrr|rrrr|rrrr|}
\hline
& \multicolumn{4}{c|}{Domino Algorithms} & \multicolumn{4}{c|}{Orthant Algorithm} & \multicolumn{4}{c|}{Generalized CBS Algorithm} \\
$k$ & $\sigma$ & Avg & Max & Bound & $\sigma$ & Avg & Max & Bound & $\sigma$ & Avg & Max & Bound \\
\hline
1D & --- & --- & --- & --- & 0.11 & 0.47 & \textbf{0.95} & \textbf{1.00} & \textbf{0.11} & \textbf{0.47} & \textbf{0.95} & 1.05 \\
2D & 0.15 & 0.93 & 1.70 & 2.05 & \textbf{0.09} & \textbf{0.71} & \textbf{0.95} & \textbf{1.00} & 0.15 & 0.94 & 1.75 & 2.10 \\
3D & 0.18 & 1.39 & 2.35 & 3.20 & \textbf{0.07} & \textbf{0.83} & \textbf{0.95} & \textbf{1.00} & 0.18 & 1.42 & 2.45 & 3.15 \\
4D & --- & --- & --- & --- & \textbf{0.05} & \textbf{0.89} & \textbf{0.95} & \textbf{1.00} & 0.21 & 1.88 & 3.00 & 4.20 \\
5D & --- & --- & --- & --- & \textbf{0.04} & \textbf{0.92} & \textbf{0.95} & \textbf{1.00} & 0.23 & 2.33 & 3.50 & 5.25 \\
6D & --- & --- & --- & --- & \textbf{0.03} & \textbf{0.94} & \textbf{0.95} & \textbf{1.00} & 0.25 & 2.78 & 4.20 & 6.30 \\
7D & --- & --- & --- & --- & \textbf{0.02} & \textbf{0.94} & \textbf{0.95} & \textbf{1.00} & 0.27 & 3.22 & 4.65 & 7.35 \\
8D & --- & --- & --- & --- & \textbf{0.01} & \textbf{0.95} & \textbf{0.95} & \textbf{1.00} & 0.28 & 3.65 & 5.25 & 8.40 \\
\hline
\end{tabular}
\caption{Normalized number of responses ($R/\log n$) for different search algorithms across dimensions.}
\label{tab:algorithm_metrics_R}
\end{table*}

\subsection{Number of Responses (\texorpdfstring{$R$}{R})}
Up until this point, the orthant algorithm has performed poorly when compared to
the domino and generalized CBS algorithms.
Under the number of responses metric, however, the orthant algorithm is able to
shine, performing the best on all metrics and across all dimensions, as shown in
\Cref{tab:algorithm_metrics_R}, tying with the generalized CBS algorithm only in
1D.
This supports our claim that the orthant algorithm is a good choice when POI
responses either carry a high cost or pose a high risk.
It is worth noting that while the generalized CBS algorithm certainly performs
worse, especially in higher dimensions, in 1--3D it performs comparably.
Our results support the claim that the generalized CBS algorithm is a good
default choice, performing competitively with regards to the number of probes
and responses, while performing by far the best in terms of distance traveled.

\end{document}